\newcommand{\dis}{\stackrel{d}{\sim}}
\newcommand{\eqla}{\stackrel{(a)}{=}}
\newcommand{\eqlb}{\stackrel{(b)}{=}}
\newtheorem{theorem}{Theorem}
\newtheorem{lemma}{Lemma}
\newtheorem{corollary}{Corollary}
\newtheorem{proposition}{Proposition}
\newcommand{\define}{\stackrel{\Delta}{=}}
\begin{document}
%
% paper title
% can use linebreaks \\ within to get better formatting as desired
%\title{Spatial Multiplexing with Limited Feedback in Ad Hoc Networks}
%\author{Yueping Wu${}^{\ddagger}$, Raymond H.\ Y.\ Louie${}^{H *}$,  Matthew R.\ McKay${}^{\ddagger}$, Iain B.\
%Collings$^{*}$\\
%${}^\ddagger${\small
% Department of Electronic and Computer
%Engineering, Hong Kong University of Science and Technology, Hong
%Kong}\\
%${}^H${\small
% Telecommunications Lab, School of Electrical and Information Engineering, University of Sydney, Australia} \\
%${}^*${\small
% Wireless Technologies Laboratory, ICT Centre, CSIRO,
%Sydney, Australia} }

%\title{Spatial Multiplexing with Limited Feedback in Ad Hoc Networks}
\title{User-Centric Interference Nulling in Downlink Multi-Antenna Heterogeneous Networks}
\author{
Yueping Wu$^{*}$, \, Ying Cui$^{\ddag}$, \,
Bruno Clerckx$^{*\dagger}$\\
\begin{minipage}{2\columnwidth}
\vspace{2mm}
\begin{center}
\small $^{*}$Department of Electrical and Electronic Engineering, Imperial College London\\
$^{\ddag}$Department of Electronic Engineering, Shanghai Jiao Tong University\\
$^{\dagger}$School of Electrical Engineering, Korea University\\
\end{center}
\end{minipage}
\thanks{Y. Cui was supported by the National Science Foundation of China grant 61401272. Y. Wu and B. Clerckx were partially supported by the Seventh Framework Programme
for Research of the European Commission under grant number HARP-318489.}
}

\maketitle

%\thispagestyle{fancyplain}
%\vspace*{-1.5cm}

 %%%%%%%%%%%%
\begin{abstract}
Heterogeneous networks (HetNets) have strong interference due to spectrum reuse. This affects the signal-to-interference ratio (SIR) of each user, and hence is one of the limiting factors of network performance.  However, in previous works, interference management approaches in HetNets are mainly based on interference level, and thus cannot effectively utilize the limited resource to improve network performance. In this paper, we propose a user-centric interference nulling (IN) scheme in downlink two-tier HetNets to improve network performance by improving each user's SIR. This scheme has three design parameters: the maximum degree of freedom for IN (i.e., maximum IN DoF), and the IN thresholds for the macro and pico users, respectively. Using tools from stochastic geometry, we first obtain a tractable expression of the coverage (equivalently outage) probability. Then, we characterize the asymptotic behavior of the outage probability in the high reliability regime. The asymptotic results show that the maximum IN DoF can affect the order gain of the asymptotic outage probability, while the IN thresholds only affect the coefficient of the asymptotic outage probability. Moreover, we show that the IN scheme can linearly improve the outage performance, and characterize the optimal maximum IN DoF which minimizes the asymptotic outage probability. 
%In addition, simulation results show that the IN scheme investigated in this paper can achieve a large coverage probability gain over the non-IN scheme.
\end{abstract}

\section{Introduction}
The modern wireless networks have seen a significant growth of high data rate applications. The conventional cellular solution, which comprises of high power base stations (BSs), cannot scale with the increasing data rate demand. One solution is the deployment of low power small cell BSs overlaid with conventional large power macro-BSs, so called heterogeneous networks (HetNets). HetNets are capable of aggressively reusing existing spectrum assets to support high data rate applications. However, spectrum reuse in HetNets causes strong interference. This affects the signal-to-interference ratio (SIR) of each user, and hence is one of the limiting factors of network performance. Interference management techniques are thus desirable in HetNets. One such technique is interference cooperation. For example, in \cite{nigam14}, a cooperation strategy among a fixed number of strongest BSs of each user is investigated for HetNets. Reference \cite{Tanbourgi14} proposes another interference cooperation strategy among BSs whose long-term average received powers (RP) are above a threshold. However, both cooperation strategies do not take into account the strength of each user's desired signal, and thus cannot effectively improve each user's SIR. Moreover, both \cite{nigam14} and  \cite{Tanbourgi14} only consider single-antenna BSs. 

%Another densifying approach is the deployment of multiple antennas at each BS. Besides increasing the network throughput, with multiple antennas, more effective interference management techniques can be implemented. For example, an interference coordination scheme was investigated in \cite{lee15} for multi-antenna cellular networks, where each user is served by its nearest BS and avoids interference from the next $n-1$ nearest BSs by utilizing zero-forcing beamforming (ZFBF) precoder at each BS. In \cite{huang13}, multi-antenna BSs in a coordination cluster which is formed by a hexagonal lattice mitigate interference to other users in the same cluster by using ZFBF precoder. A similar interference coordination scheme was investigated in \cite{akoum13}, but the cluster centers are formed by a homogeneous Poisson point process (PPP). Recently, a novel interference coordination strategy was proposed in \cite{li14IC}, where an interference nulling (IN) region is set for each user based on the power ratio of the desired signal and interference from each BS, and the multi-antenna BSs in the IN region of each user use ZFBF precoder to suppress interference to this user. Note that all these interference coordination schemes were investigated in single-tier cellular networks. 

Deploying multiple antennas at each BS in HetNets can further improve network performance. With multiple antennas, besides boosting the desired signal to each user, more effective interference management techniques can be implemented. Interference coordination strategies using multiple antennas have been extensively investigated in single-tier cellular networks (see \cite{huang13,li14IC} and the references therein). However, there are very limited works on the analysis of interference management techniques in multi-antenna HetNets. In \cite{Adhikary14}, the authors consider a HetNet with a single multi-antenna macro-BS and multiple small-BSs, where the multiple antennas at the macro-BS are used for serving its scheduled users as well as mitigating the interference to the receivers in the small cells using different interference coordination schemes. These schemes are analyzed and shown to improve the performance of the HetNet. However, since only one macro-BS is considered, the analytical results obtained in \cite{Adhikary14} cannot reflect the macro-tier interference, and thus cannot offer accurate insights for practical HetNets. In \cite{xia13}, a \emph{fixed} number of BSs form a cluster, and adopt an interference coordination scheme in downlink multi-antenna HetNets, where the BSs mitigate interference to users in the cluster.  Bounds of the coverage probability are derived based on an unrealistic assumption that the BSs in each cluster are the strongest BSs of all the users in this cluster. This scheme cannot effectively improve each user's SIR, as it does not consider each user's desired signal strength.  Recently, a novel user-centric IN scheme, which takes account of each user's desired signal strength  and interference level, is proposed and analyzed for single-tier cellular networks with multi-antenna BSs in \cite{li14IC}. However, in \cite{li14IC}, the maximum degree of freedom for IN (IN DoF) at each BS is not adjustable. This scheme thus cannot properly exploit the limited resource in single-tier cellular networks. Moreover, directly applying the approach in \cite{li14IC} to HetNets cannot fully exploit different properties of the macro and pico users in HetNets. 

In this paper, we consider the downlink two-tier multi-antenna HetNets and propose a user-centric IN scheme to improve network performance by improving each user's SIR. In this scheme, each  macro (pico) user first sends an IN request to a macro-BS\footnote{Note that, compared to pico-BSs, macro-BSs usually cause stronger interference due to larger transmit power, and each macro-BS normally has a larger number of transmit antennas (i.e., a better capability of performing spatial cancellation). Thus, it is more advisable to perform IN at macro-BSs.} if the power ratio of its desired signal and the interference from the macro-BS, referred to as the signal-to-individual-interference ratio (SIIR), is below an IN threshold for macro (pico) users. Then, each macro-BS utilizes zero-forcing beamforming (ZFBF) precoder to avoid interference to at most $U$ users which send IN requests to it as well as boost the desired signal to its scheduled user. This scheme has three design parameters: the maximum IN DoF $U$, and the IN thresholds for the macro and pico users, respectively. In general, the investigation of interference management techniques in multi-antenna HetNets is very challenging, mainly due to i) the statistical dependence among macro-BSs and pico-BSs \cite{Adhikary14}, ii) the complex distribution of desired signal using multi-antenna communication schemes, and iii) the complicated interference distribution caused by interference management techniques (e.g., beamforming). In this paper, by adopting appropriate approximations and utilizing tools from stochastic geometry, we first obtain a tractable expression of the coverage (equivalently outage) probability. Then, we characterize the asymptotic behavior of the outage probability in the high reliability regime. The asymptotic results show that the maximum IN DoF can affect the order gain of the asymptotic outage probability, while the IN thresholds only affect the coefficient of the asymptotic outage probability. Moreover, we show that the IN scheme can linearly improve the outage performance, and characterize the optimal maximum IN DoF which minimizes the asymptotic outage probability. The analytical results obtained in this paper provide valuable design insights for practical HetNets.

\section{System Model}
\subsection{Downlink Two-Tier Heterogeneous Networks}
We consider a two-tier HetNet where a macro-cell tier is overlaid with a pico-cell tier, as shown in Fig.\ \ref{fig:model}. The locations of the macro-BSs and the pico-BSs are spatially distributed as two independent homogeneous Poisson point processes (PPPs) $\Phi_{1}$ and $\Phi_{2}$ with densities $\lambda_{1}$ and $\lambda_{2}$, respectively. The locations of the users are also distributed as an independent homogeneous PPP $\Phi_{u}$ with density $\lambda_{u}$. Without loss of generality, denote the macro-cell tier as the $1$st tier and the pico-cell tier as the $2$nd tier. We focus on the downlink scenario. Each macro-BS has $N_{1}$ antennas with total transmission power $P_{1}$, each pico-BS has $N_{2}$ antennas with total transmission power $P_{2}$, and each user has a single antenna. We assume $N_{1}>N_{2}$. We consider both large-scale fading and small-scale fading. Specifically, due to large-scale fading, transmitted signals from the $j$th tier with distance $r$ are attenuated by a factor $\frac{1}{r^{\alpha_{j}}}$, where $\alpha_{j}>2$ is the path loss exponent of the $j$th tier and $j=1,2$. For small-scale fading, we assume Rayleigh fading channels.

\subsection{User Association}\label{subset:user_asso}
We assume open access \cite{nigam14}. User $i$ (denoted as $u_{i}$) is associated with the BS which provides the maximum \emph{long-term average} RP among all the macro-BSs and pico-BSs. This associated BS is called the \emph{serving BS} of user $i$. Note that within each tier, the nearest BS to user $i$ provides the strongest long-term average RP in this tier. User $i$ is thus associated with (the nearest BS in) the $j^{*}_{i}$th tier, if\footnote{In the user association procedure, the first antenna is normally used to transmit signal (using the total transmission power of each BS) for RP determination according to LTE standards.} $j_{i}^{*}=  {\arg\:\max}_{j\in\{1,2\}}P_{j}Z_{i,j}^{-\alpha_{j}}$,
where $Z_{i,j}$ is the distance between user $i$ and its nearest BS in the $j$th tier. We refer to the users associated with the macro-cell tier as the \emph{macro-users}, denoted as $\mathcal{U}_{1}=\left\{u_{i}|P_{1}Z_{i,1}^{-\alpha_{1}}\ge P_{2}Z_{i,2}^{-\alpha_{2}}\right\}$, and the users associated with the pico-cell tier as the \emph{pico-users}, denoted as $\mathcal{U}_{2}=\left\{u_{i}|P_{2}Z_{i,2}^{-\alpha_{2}}>P_{1}Z_{i,1}^{-\alpha_{1}}\right\}$. All the users can be partitioned into two disjoint user sets: $\mathcal{U}_{1}$ and $\mathcal{U}_{2}$. After user association, each BS schedules its associated users according to TDMA, i.e., scheduling one user in each time slot. Hence, there is no intra-cell interference. 

%According to the above mentioned user association policy, all the users can be partitioned into the following two disjoint user sets:
%%1) \emph{the set of macro-users} $\mathcal{U}_{1}=\left\{u_{i}|P_{1}Z_{i,1}^{-\alpha_{1}}\ge BP_{2}Z_{i,2}^{-\alpha_{2}}\right\}$, 2) \emph{the set of unoffloaded pico-users} $\mathcal{U}_{2\bar{O}}=\left\{u_{i}|P_{2}Z_{i,2}^{-\alpha_{2}}>P_{1}Z_{i,1}^{-\alpha_{1}}\right\}$, 3) \emph{the set of offloaded users} $\mathcal{U}_{2O}=\left\{u_{i}|P_{2}Z_{i,2}^{-\alpha_{2}}\le P_{1}Z_{i,1}^{-\alpha_{1}}<BP_{2} Z_{i,2}^{-\alpha_{2}}\right\}$,
%\begin{enumerate}
%\item the set of \emph{macro-users}:\\ $\mathcal{U}_{1}=\left\{u_{i}|P_{1}Z_{i,1}^{-\alpha_{1}}\ge BP_{2}Z_{i,2}^{-\alpha_{2}}\right\}$\;,
%\item the set of \emph{ pico-users}:\\ $\mathcal{U}_{2}=\left\{u_{i}|P_{2}Z_{i,2}^{-\alpha_{2}}>P_{1}Z_{i,1}^{-\alpha_{1}}\right\}$\;.
%%\item \emph{offloaded user without IN} set: \\$\mathcal{U}_{2O\bar{C}}=\left\{u_{i}|P_{2}Z_{i,2}^{-\alpha_{k}}\le P_{1}Z_{i,1}^{-\alpha_{1}}<BP_{2} Z_{i,2}^{-\alpha_{2}}\;{\rm and}\;u_{i}\;{\rm is\;not\; nulled\; interference}\right\}$\;.
%\end{enumerate}
%where the macro-users are associated with the maco-cell tier, and the pico-users are associated with the pico-cell tier.

\subsection{Performance Metric}
In this paper, we study the performance of the typical user denoted as $u_{0}$, which is a scheduled user located at the origin \cite{Tanbourgi14}. Since HetNets are interference-limited, we ignore the thermal noise in the analysis of this paper. Note that the analytical results with thermal noise can be calculated in a similar way. We investigate the \emph{coverage probability} of $u_{0}$, which is defined as the probability that the SIR of $u_{0}$ is larger than a threshold \cite{nigam14}, and can be mathematically written as
\begin{align}\label{eq:CP_def}
\mathcal{S}(\beta)&\define{\rm Pr}\left({\rm SIR}_{0}>\beta\right)
\end{align} where $\beta$ is the SIR threshold.
%Based on (\ref{eq:CP_def}), the \emph{outage probability} of $u_{0}$, which is defined as the probability that the SIR of $u_{0}$ is smaller than a threshold, can be written as $1-\mathcal{S}\left(\beta\right)$.

\section{Interference Nulling Scheme}
%In HetNets, each user experiences strong interference, especially the interference from macro-BSs with larger transmit power. This may cause small SIR to the user, especially to  users which have poor desired signal.
In this section, we first elaborate on a user-centric IN scheme to avoid interference from some macro-BSs which generate dominant interference. Then, we obtain some distributions related to this scheme.

\begin{figure}[t] \centering
\includegraphics[width=1\columnwidth]{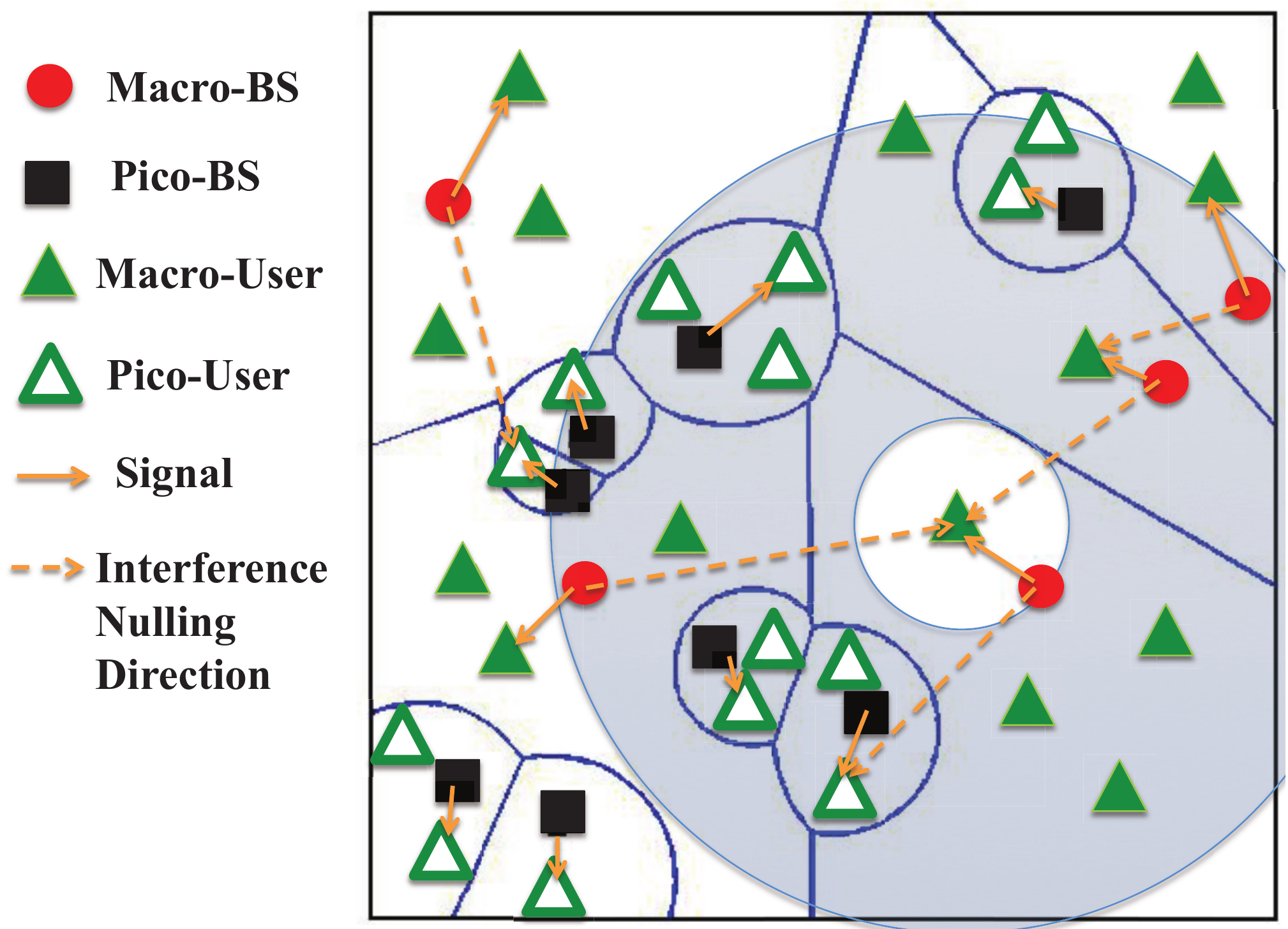}
\caption{System Model ($U=1$)}\label{fig:model}
\vspace{-1mm}
\end{figure}

\subsection{IN Scheme Description}\label{subsec:IN_descp}
First, we refer to a macro-BS which causes the SIIR at a scheduled user $i$ in the $j$th tier ($u_{i}\in\mathcal{U}_{j}$) below threshold $T_{j}>1$ as a \emph{potential IN macro-BS} of $u_{i}$, where $j=1,2$. We refer to $T_{j}$ as the IN threshold for the $j$th tier. Mathematically, macro-BS $\ell$ is a potential IN macro-BS of scheduled user $u_{i}\in\mathcal{U}_{j}$ if $\frac{P_{j}Z_{i,j}^{-\alpha_{j}}}{P_{1}|D_{1,\ell i}|^{-\alpha_{1}}}<T_{j}$, where $|D_{1,\ell i}|$ is the distance from macro-BS $\ell$ to $u_{i}$. Note that $T_{1}$ and $T_{2}$ are two design parameters of the IN scheme. In each time slot, each scheduled user sends IN requests to all of its potential IN macro-BSs. We refer to the scheduled users which send IN requests to macro-BS $\ell$ as the \emph{potential IN users} of macro-BS $\ell$ (in this time slot). Consider a particular time slot. Let $K_{\ell}$ denote the number of the potential IN users of macro-BS $\ell$. Consider two cases in the following. i) If $K_{\ell}>0$, macro-BS $\ell$ makes use of at most $U$ ($0<U<N_{1}$) DoF to perform IN to some of its potential IN users. Note that $U$ is another design parameter of this IN scheme. In particular, if $0<K_{\ell}\le U$, macro-BS $\ell$ can perform IN to all of its $K_{\ell}$ potential IN users using $K_{\ell}$ DoF; if $K_{\ell}>U$, macro-BS $\ell$ randomly selects $U$ out of its $K_{\ell}$ potential IN users according to the uniform distribution, and perform IN to the selected $U$ users using $U$ DoF. Hence, if $K_{\ell}>0$, macro-BS $\ell$ performs IN to $u_{{\rm IN},\ell}\define\min\left(U,K_{\ell}\right)$ potential IN users (referred to as the \emph{IN users} of macro-BS $\ell$) using $u_{{\rm IN},\ell}$ DoF (referred to as the \emph{IN DoF}). ii) If $K_{\ell}=0$, macro-BS $\ell$ does not perform IN. In this case, we let $u_{{\rm IN},\ell}=0$. In both cases, $N_{1}-u_{{\rm IN},\ell}$ DoF at macro-BS $\ell$ is used for boosting the desired signal to its scheduled user.

Now, we introduce the precoding vectors at macro-BSs in the IN scheme. Consider two cases in the following. i) If $K_{\ell}>0$, macro-BS $\ell$ utilizes the low-complexity ZFBF precoder to serve its scheduled user and simultaneously perform IN to its $u_{{\rm IN},0}$ IN users. Specifically,  denote $\mathbf{H}_{1,\ell}=\left[\mathbf{h}_{1,\ell}\; \mathbf{g}_{1,\ell1}\;\ldots\;\mathbf{g}_{1,\ell u_{{\rm IN},\ell}}\right]^{\dagger}$, where\footnote{The notation $X \dis Y$ means that $X$ \emph{is distributed as} $Y$.} $\mathbf{h}_{1,\ell}\dis\mathcal{CN}_{N_{1},1}\left(\mathbf{0}_{N_{1}\times 1},\mathbf{I}_{N_{1}}\right)$ denotes the channel vector between macro-BS $\ell$ and its scheduled user, and $\mathbf{g}_{1,\ell i}\dis \mathcal{CN}_{N_{1},1}\left(\mathbf{0}_{N_{1}\times 1},\mathbf{I}_{N_{1}}\right)$ denotes the channel vector between macro-BS $\ell$ and its $i$th IN user $(i=1,\ldots,u_{{\rm IN},\ell})$. The ZFBF precoding matrix at macro-BS $\ell$ is designed to be $\mathbf{W}_{1,\ell}=\mathbf{H}_{1,\ell}^{\dagger}\left(\mathbf{H}_{1,\ell}\mathbf{H}_{1,\ell}^{\dagger}\right)^{-1}$ and the ZFBF vector at macro-BS $\ell$ is designed to be $\mathbf{f}_{1,\ell}=\frac{\mathbf{w}_{1,\ell}}{\|\mathbf{w}_{1,\ell}\|}$, where $\mathbf{w}_{1,\ell}$ is the first column of $\mathbf{W}_{1,\ell}$. ii) If $K_{\ell}=0$, macro-BS $\ell$ uses the maximal ratio transmission (MRT) precoder to serve its scheduled user, which is a special case of the ZFBF precoder introduced for $K_{\ell}>0$ and can be readily obtained from it by letting $u_{{\rm IN},\ell}=0$, i.e., $\mathbf{H}_{1,\ell}=\mathbf{h}_{1,\ell}^{\dagger}$. Next, we introduce the precoding vectors at pico-BSs. Each pico-BS utilizes the MRT precoder to serve its scheduled user. Specifically, the beamforming vector at pico-BS $\ell$ is $\mathbf{f}_{2,\ell}=\frac{\mathbf{h}_{2,\ell}}{\left\|\mathbf{h}_{2,\ell}\right\|}$, 
%\begin{align}\label{eq:BFvec_pico}
%\mathbf{f}_{2,i}=\frac{\mathbf{h}_{2,i}}{\left\|\mathbf{h}_{2,i}\right\|}
%\end{align} 
where $\mathbf{h}_{2,\ell}\dis\mathcal{CN}_{N_{2},1}\left(\mathbf{0}_{N_{2}\times 1},\mathbf{I}_{N_{2}}\right)$ denotes the channel vector between pico-BS $\ell$ and its scheduled user. Note that the non-IN scheme can be included in the IN scheme as a special case by letting $T_{1}=T_{2}=1$ and $U=0$.\footnote{All the analytical results in this paper hold for $T_{1}=T_{2}=1$ and $U=0$.} 

We now obtain the SIR of the typical user. Under the above IN scheme, $u_{0}$ experiences three types of interference: 1) residual aggregated interference $I_{j,1C}$ from its potential IN macro-BSs $\Phi_{1C}$ which do not select $u_{0}$ for IN, 2) aggregated interference $I_{j,1O}$ from interfering macro-BSs $\Phi_{1O}$ which are not its potential IN macro-BSs, and 3) aggregated interference $I_{j,2}$ from all interfering pico-BSs.  Specifically, the SIR of the typical user $u_{0}\in\mathcal{U}_{j}$ $(j=1,2)$ is given by\footnote{The index of the typical user and its serving BS is $0$.}
\begin{align}\label{eq:SIRj0}
{\rm SIR}_{j,0}&=\frac{\frac{P_{j}}{Y_{j}^{\alpha_{j}}}\left|\mathbf{h}_{j,00}^{\dagger}\mathbf{f}_{j,0}\right|^{2}}{P_{1}I_{j,1C}+P_{1}I_{j,1O}+P_{2}I_{j,2}}
\end{align}
where  $\mathbf{h}_{j,00}$ is the channel vector between $u_{0}\in\mathcal{U}_{j}$ and its serving BS $B_{j,0}$, $Y_{j}$ is the distance between $u_{0}$ and $B_{j,0}$, $\mathbf{f}_{j,0}$ is the beamforming vector at $B_{j,0}$, with $\left|\mathbf{h}_{j,00}^{\dagger}\mathbf{f}_{j,0}\right|^{2}\dis{\rm Gamma}\left(M_{j},1\right)$, $M_{1}=N_{1}-u_{{\rm IN},0}$ and $M_{2}=N_{2}$. Here, $I_{j,1C}=\sum_{\ell\in\Phi_{1C}}\left|D_{1,\ell0}\right|^{-\alpha_{1}}\left|\mathbf{h}_{1,\ell0}^{\dagger}\mathbf{f}_{1,\ell}\right|^{2}$, $I_{j,1O}=\sum_{\ell\in\Phi_{1O}}\left|D_{1,\ell0}\right|^{-\alpha_{1}}\left|\mathbf{h}_{1,\ell0}^{\dagger}\mathbf{f}_{1,\ell}\right|^{2}$, and $I_{j,2}=\sum_{\ell\in\Phi_{2}\backslash B_{j,0}}|D_{2,\ell0}|^{-\alpha_{2}}\left|\mathbf{h}_{2,\ell0}^{\dagger}\mathbf{f}_{2,\ell}\right|^{2}$, where $\mathbf{h}_{j,\ell0}$ is the channel vector between BS $\ell$ in the $j$th tier and $u_{0}$, $\mathbf{f}_{j,\ell}$ is the beamforming vector at BS $\ell$ in the $j$th tier, and $\left|\mathbf{h}_{j,\ell0}^{\dagger}\mathbf{f}_{j,\ell}\right|^{2}\dis{\rm Gamma}(1,1)$.

\subsection{Preliminary Results}\label{subsec:prelim}
In this part, we evaluate some distributions related to the IN scheme, which will be used to calculate the coverage probability in (\ref{eq:CP_def}). We first calculate the probability mass function (p.m.f.) of the number of the potential IN users of $u_{0}$'s serving macro-BS (when $u_{0}\in\mathcal{U}_{1}$), denoted as $K_{0}$. The p.m.f. of $K_{0}$ depends on the point processes formed by the scheduled macro and pico users, which are related to but not PPPs \cite{bai13}. For analytical tractability, we approximate the scheduled macro and pico users as two independent PPPs with densities $\lambda_{1}$ and $\lambda_{2}$, respectively.\footnote{Note that approximating the scheduled users as a homogeneous PPP has been considered in existing papers (see e.g., \cite{bai13}). Moreover, simulation results in Section \ref{subsec:CP_general} will verify the accuracy of this approximation.} Then, we have the p.m.f. of $K_{0}$ as follows:
\begin{lemma}\label{lem:num_req_pmf}
The p.m.f. of $K_{0}$ is given by
\begin{align}\label{eq:K_pmf}
{\rm Pr}\left(K_{0}=k\right)\approx\frac{\bar{L}^{k}}{k!}\exp\left(-\bar{L}\right)
\end{align}
where $\bar{L}=\bar{L}_{1}+\bar{L}_{2}$ with
\begin{align}
\bar{L}_{j}=&2\pi\lambda_{j}\int_{0}^{\infty}r\int_{\left(\frac{P_{j}}{P_{1}T_{j}}\right)^{\frac{1}{\alpha_{j}}}r^{\frac{\alpha_{1}}{\alpha_{j}}}}^{\left(\frac{P_{j}}{P_{1}}\right)^{\frac{1}{\alpha_{j}}}r^{\frac{\alpha_{1}}{\alpha_{j}}}}f_{Y_{j}}(y){\rm d}y{\rm d}r\;,\;j=1,2.
\end{align}
Here, $f_{Y_{j}}(y)$ ($j=1,2$) are given as follows:
\begin{align}
f_{Y_{1}}(y)&=\frac{2\pi\lambda_{1}}{\mathcal{A}_{1}}y\exp\left(-\pi\left(\lambda_{1}y^{2}+\lambda_{2}\left(\frac{P_{2}}{P_{1}}\right)^{\frac{2}{\alpha_{2}}}y^{\frac{2\alpha_{1}}{\alpha_{2}}}\right)\right)\label{eq:pdfY1}\\
f_{Y_{2}}(y)&=\frac{2\pi\lambda_{2}}{\mathcal{A}_{2}}y\exp\left(-\pi\left(\lambda_{1}\left(\frac{P_{1}}{P_{2}}\right)^{\frac{2}{\alpha_{1}}}y^{\frac{2\alpha_{2}}{\alpha_{1}}}+\lambda_{2}y^{2}\right)\right)\label{eq:pdfY2}
\end{align}
where $\mathcal{A}_{j}\define {\rm Pr}\left(u_{0}\in\mathcal{U}_{j}\right)$ ($j=1,2$) are given in (\ref{eq:A1}) and (\ref{eq:A2}) at the top of the next page.
\begin{figure*}[!t]
\begin{align}
&\mathcal{A}_{1}=2\pi\lambda_{1}\int_{0}^{\infty}z\exp\left(-\pi\lambda_{1}z^{2}\right)\exp\left(-\pi\lambda_{2}\left(\frac{P_{2}}{P_{1}}\right)^{\frac{2}{\alpha_{2}}}z^{\frac{2\alpha_{1}}{\alpha_{2}}}\right){\rm d}z\label{eq:A1}\\
&\mathcal{A}_{2}=2\pi\lambda_{2}\int_{0}^{\infty}z\exp\left(-\pi\lambda_{2}z^{2}\right)\exp\left(-\pi\lambda_{1}\left(\frac{P_{1}}{P_{2}}\right)^{\frac{2}{\alpha_{1}}}z^{\frac{2\alpha_{2}}{\alpha_{1}}}\right){\rm d}z\label{eq:A2}
\end{align}
\normalsize \hrulefill
\end{figure*}
\end{lemma}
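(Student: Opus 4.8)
\emph{Proof proposal.} The plan is to express $K_0$ as the total number of points of a marked Poisson point process and to invoke the marking and superposition theorems, which will immediately return a Poisson law with the claimed mean. I would first condition on $u_0\in\mathcal{U}_1$ and on its serving macro-BS $B_{1,0}$, and adopt the stated approximation that the scheduled macro-users and pico-users are two independent homogeneous PPPs of densities $\lambda_1$ and $\lambda_2$. Because $K_0$ counts every scheduled user $u_i$ (in either tier) for which $B_{1,0}$ is a potential IN macro-BS, the problem reduces to finding, for a scheduled tier-$j$ user at distance $r$ from $B_{1,0}$, the probability $p_j(r)$ that it sends an IN request, and then integrating $p_j(r)$ against the intensity $2\pi\lambda_j r\,{\rm d}r$.

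The crux is translating the IN-request event into a band for the serving distance $Z_{i,j}$. By definition, $B_{1,0}$ is a potential IN macro-BS of $u_i\in\mathcal{U}_j$ precisely when the SIIR $P_j Z_{i,j}^{-\alpha_j}/(P_1 r^{-\alpha_1})$ is below $T_j$; at the same time, since the serving BS is the strongest, consistency with $u_i\in\mathcal{U}_j$ forces this SIIR to exceed $1$ (otherwise $B_{1,0}$ would itself be the serving BS). Imposing $1<{\rm SIIR}<T_j$ and solving for $Z_{i,j}$ gives the two-sided band $\big(\frac{P_j}{P_1 T_j}\big)^{1/\alpha_j} r^{\alpha_1/\alpha_j}<Z_{i,j}<\big(\frac{P_j}{P_1}\big)^{1/\alpha_j} r^{\alpha_1/\alpha_j}$, whose endpoints are exactly the limits appearing in $\bar{L}_j$. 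Treating $Z_{i,j}$ as an independent mark drawn from the conditional serving-distance density $f_{Y_j}$ of the typical tier-$j$ user, I obtain $p_j(r)=\int f_{Y_j}(y)\,{\rm d}y$ over this band.

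I would then thin each tier-$j$ scheduled-user PPP independently with retention probability $p_j(r)$, which by the marking theorem produces an inhomogeneous PPP of requesters to $B_{1,0}$; superposing the two (independent) tiers yields a single PPP whose total number of points is Poisson with mean $\bar{L}=\bar{L}_1+\bar{L}_2$, where $\bar{L}_j=2\pi\lambda_j\int_0^\infty r\,p_j(r)\,{\rm d}r$. Substituting $p_j(r)$ reproduces the displayed expression for $\bar{L}_j$, and the Poisson total count gives ${\rm Pr}(K_0=k)\approx\bar{L}^k e^{-\bar{L}}/k!$ directly.

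The main obstacle is justifying this marked-PPP idealization rather than the algebra: the genuine scheduled-user process is a dependent thinning of $\Phi_u$ by cell association, so neither its PPP-ness nor the treatment of the serving distances $\{Z_{i,j}\}$ as i.i.d.\ marks independent of the inter-BS geometry is exact. The most delicate case is $j=1$, where $B_{1,0}$ is itself one of the macro-BSs, so $Z_{i,1}\le r$ is geometrically forced; I would note that this correlation is automatically respected because the upper band limit collapses to $r$ when $j=1$. These points I would present as approximations consistent with the stated modeling assumption, deferring their numerical validation to the simulations of Section~\ref{subsec:CP_general}.
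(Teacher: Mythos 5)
Your proposal follows essentially the same route as the paper's proof: the paper likewise approximates the scheduled macro- and pico-users as independent homogeneous PPPs, translates the IN-request event into the distance band $T_j^{-1/\alpha_j}(P_j/P_1)^{1/\alpha_j}r^{\alpha_1/\alpha_j}<Y_j<(P_j/P_1)^{1/\alpha_j}r^{\alpha_1/\alpha_j}$ (stated for $j=1$ as ${\rm SIIR}<T_1$ together with $Y_1<R_{1i}$, which is your condition $1<{\rm SIIR}<T_j$), thins each tier to an inhomogeneous PPP of requesters, and superposes them to get a Poisson count with mean $\bar{L}_1+\bar{L}_2$. The only cosmetic difference is that the paper invokes Slivnyak's theorem to place a typical macro-BS at the origin rather than conditioning on $B_{1,0}$, which amounts to the same heuristic; your derivation and its caveats are correct.
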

\begin{proof}
See Appendix \ref{proof:pmf_Lbar}.
\end{proof}

%Note that $\bar{L}$ denotes the average number of IN requests received by the serving BS of $u_{0}$, and $\bar{L}_{j}$ ($j=1,2$) denotes the average number of IN requests received by the serving BS of $u_{0}$ from the scheduled users in the $j$th tier.

%From (\ref{eq:Kbar_uneql}), we can easily see that $\bar{K}_{j}$ increases with $T_{j}$, which can be explained by noting that as $T_{j}$ increases, each macro-BS can be in the IN region of more scheduled users in the $j$th tier. Moreover, since $\lambda_{1}<\lambda_{2}$, we can show that $\left(1+\frac{\lambda_{2}}{\lambda_{1}}\left(\frac{P_{2}}{P_{1}}\right)^{\frac{2}{\alpha}}\right)^{-1}<\left(\frac{\lambda_{1}}{\lambda_{2}}+\left(\frac{P_{2}}{P_{1}}\right)^{\frac{2}{\alpha}}\right)^{-1}$. Hence, when $\alpha_{1}=\alpha_{2}=\alpha$, from (\ref{eq:Kbar_eql}), we know that the increasing rate of $\bar{K}_{1}$ w.r.t.\ $T_{1}$ is smaller than that of $\bar{K}_{2}$ w.r.t.\ $T_{2}$.

Next, we calculate the p.m.f. of $u_{{\rm IN},0}\define\min\left(U,K_{0}\right)$ based on \emph{Lemma \ref{lem:num_req_pmf}}, which is given as follows:
\begin{lemma}\label{lem:pmf_uIN0}
The p.m.f. of $u_{{\rm IN},0}$ is given by
\begin{align}
{\rm Pr}\left(u_{{\rm IN},0}=u\right)=
\begin{cases}
&{\rm Pr}\left(K_{0}=u\right)\;,\hspace{12mm} {\rm for}\;0\le u<U \\
&\sum_{k=U}^{\infty}{\rm Pr}\left(K_{0}=k\right)\;,\hspace{2mm}{\rm for}\;u=U
\end{cases}
\;.\notag
\end{align}
\end{lemma}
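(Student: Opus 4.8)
The plan is to push the distribution of $K_{0}$ from \emph{Lemma \ref{lem:num_req_pmf}} through the deterministic map $k\mapsto\min(U,k)$. Since $u_{{\rm IN},0}=\min\left(U,K_{0}\right)$ is a function of the single random variable $K_{0}$, and $K_{0}$ takes values in $\{0,1,2,\ldots\}$ while $U$ is a fixed positive integer, the image $u_{{\rm IN},0}$ takes values only in $\{0,1,\ldots,U\}$. I would therefore compute ${\rm Pr}\left(u_{{\rm IN},0}=u\right)$ for each such $u$ by identifying the preimage of $\{u\}$ under the minimum and summing the p.m.f.\ of $K_{0}$ over that preimage, i.e.\ by a straightforward event decomposition and countable additivity.

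First I would treat the regime $0\le u<U$. Here $\min(U,k)=u$ holds \emph{if and only if} $k=u$: the forward implication is immediate, and conversely $\min(U,k)=u<U$ forces $k<U$ (otherwise the minimum would equal $U$), whence $\min(U,k)=k=u$. Thus the preimage of $\{u\}$ is the singleton $\{u\}$, so the event $\{u_{{\rm IN},0}=u\}$ coincides with $\{K_{0}=u\}$ and ${\rm Pr}\left(u_{{\rm IN},0}=u\right)={\rm Pr}\left(K_{0}=u\right)$.

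Next I would treat the boundary value $u=U$. Here $\min(U,k)=U$ holds if and only if $k\ge U$, so the preimage is the tail $\{U,U+1,\ldots\}$; the event $\{u_{{\rm IN},0}=U\}$ therefore coincides with $\{K_{0}\ge U\}$, and by countable additivity ${\rm Pr}\left(u_{{\rm IN},0}=U\right)={\rm Pr}\left(K_{0}\ge U\right)=\sum_{k=U}^{\infty}{\rm Pr}\left(K_{0}=k\right)$. Substituting the expression (\ref{eq:K_pmf}) from \emph{Lemma \ref{lem:num_req_pmf}} for each ${\rm Pr}\left(K_{0}=k\right)$ then gives the stated closed form, and a sanity check is that these probabilities sum to one, since the singleton masses for $0\le u<U$ together with the tail mass for $u=U$ exhaust $\sum_{k=0}^{\infty}{\rm Pr}\left(K_{0}=k\right)=1$.

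There is no substantive obstacle here: the result is a direct consequence of the partition of the sample space induced by the minimum operator. The only point requiring (minor) care is the clean handling of the threshold $u=U$, where one must be sure to assign the entire tail event $\{K_{0}\ge U\}$ rather than merely the singleton $\{K_{0}=U\}$; this is precisely the effect of the saturation at the maximum IN DoF $U$, which collapses all outcomes $K_{0}\ge U$ onto the single value $U$.
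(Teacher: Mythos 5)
Your proof is correct and matches the paper's (implicit) reasoning: the paper states \emph{Lemma \ref{lem:pmf_uIN0}} without proof precisely because it is the direct preimage computation you carry out, namely $\{\min(U,K_{0})=u\}=\{K_{0}=u\}$ for $0\le u<U$ and $\{\min(U,K_{0})=U\}=\{K_{0}\ge U\}$. Your careful handling of the saturation at $u=U$ and the normalization sanity check are exactly the right (and only) points of substance here.
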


Let $p_{c}\left(U,T_{1},T_{2}\right)$ denote the probability that a randomly selected (according to the uniform distribution) potential IN macro-BS of $u_{0}$ selects $u_{0}$ for IN. Note that the event that $u_{0}$ sends IN requests and the event that all the other scheduled users send IN requests are dependent. For analytical tractability, we approximate these two events as independent events. Then, we have $p_{c}\left(U,T_{1},T_{2}\right)\approx {\rm E}\left[\min\left\{\frac{U}{K_{0}+1},1\right\}\right]$, which can be calculated as follows:
\begin{lemma}\label{lem:INprob}
The IN probability is given by
\begin{align}
p_{c}\left(U,T_{1},T_{2}\right)
%\approx{\rm E}\left[\frac{1}{K_{0}+1}\right]\notag\\
\approx&\exp\left(-\bar{L}\right)\left(\sum_{k=0}^{U-1}\frac{\bar{L}^{k}}{k!}+U\sum_{k=U}^{\infty}\frac{\bar{L}^{k}}{(k+1)!}\right)\;.\notag
\end{align}
\end{lemma}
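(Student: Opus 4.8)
The plan is to start from the approximation $p_{c}\left(U,T_{1},T_{2}\right)\approx {\rm E}\left[\min\left\{\frac{U}{K_{0}+1},1\right\}\right]$ already established in the preceding discussion, and simply evaluate this expectation using the p.m.f.\ of $K_{0}$ from \emph{Lemma \ref{lem:num_req_pmf}}. The intuition behind the approximation is that, conditioned on $u_{0}$ being a potential IN user of a randomly chosen potential IN macro-BS, the number of its \emph{remaining} potential IN users is (under the independence and PPP approximations) distributed as $K_{0}$, so that macro-BS has $K_{0}+1$ potential IN users in total; it then selects $u_{0}$ with probability $1$ if $K_{0}+1\le U$ and with probability $\frac{U}{K_{0}+1}$ otherwise, which is exactly $\min\left\{\frac{U}{K_{0}+1},1\right\}$.

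The key step is to split the expectation according to which branch of the $\min$ is active. Since $\frac{U}{K_{0}+1}\ge 1 \Leftrightarrow K_{0}+1\le U \Leftrightarrow K_{0}\le U-1$, the $\min$ equals $1$ precisely for $K_{0}\in\{0,1,\ldots,U-1\}$ and equals $\frac{U}{K_{0}+1}$ for $K_{0}\ge U$. Expressing the expectation as a sum against the p.m.f.\ of $K_{0}$ therefore gives
\begin{align}
{\rm E}\left[\min\left\{\frac{U}{K_{0}+1},1\right\}\right]=\sum_{k=0}^{U-1}{\rm Pr}\left(K_{0}=k\right)+\sum_{k=U}^{\infty}\frac{U}{k+1}{\rm Pr}\left(K_{0}=k\right)\;. \notag
\end{align}
I would then substitute the Poisson-form p.m.f.\ ${\rm Pr}\left(K_{0}=k\right)\approx\frac{\bar{L}^{k}}{k!}\exp(-\bar{L})$ from \emph{Lemma \ref{lem:num_req_pmf}}. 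The first sum immediately becomes $\exp(-\bar{L})\sum_{k=0}^{U-1}\frac{\bar{L}^{k}}{k!}$, and for the second sum the only manipulation needed is the identity $\frac{1}{k+1}\cdot\frac{1}{k!}=\frac{1}{(k+1)!}$, which turns it into $U\exp(-\bar{L})\sum_{k=U}^{\infty}\frac{\bar{L}^{k}}{(k+1)!}$. Adding the two pieces yields exactly the claimed expression.

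This computation is essentially mechanical, so there is no substantial analytical obstacle; convergence of both series is guaranteed since each is dominated by the exponential series $\sum_{k}\bar{L}^{k}/k!=\exp(\bar{L})$. The one point that requires care is the placement of the split point: one must verify that $k=U-1$ is the last index contributing a full probability mass (because $\frac{U}{(U-1)+1}=1$) and that $k=U$ is the first index for which the selection probability drops below $1$. The deeper approximations — treating the scheduled macro- and pico-users as independent PPPs (inherited from \emph{Lemma \ref{lem:num_req_pmf}}) and decoupling the event that $u_{0}$ sends an IN request from the events that the other users do so — are what make the initial expression for $p_{c}$ an approximation rather than an equality, but these are already assumed in the setup and need not be re-derived here.
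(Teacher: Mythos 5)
Your proposal is correct and follows essentially the same route as the paper's own proof: it starts from the approximation $p_{c}\left(U,T_{1},T_{2}\right)\approx {\rm E}\left[\min\left\{\frac{U}{K_{0}+1},1\right\}\right]$, splits the expectation at $K_{0}=U-1$ versus $K_{0}\ge U$ exactly as in (\ref{eq:INprob_cal}), and substitutes the Poisson p.m.f.\ from \emph{Lemma \ref{lem:num_req_pmf}} with the identity $\frac{1}{k+1}\cdot\frac{1}{k!}=\frac{1}{(k+1)!}$. Your added remarks on series convergence and the placement of the split point are correct but not needed beyond what the paper does.
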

\begin{proof}
See Appendix \ref{proof:lem_INprob}
\end{proof}

Note that different potential IN macro-BSs of $u_{0}$ selects $u_{0}$ for IN dependently (as the numbers of the potential IN users of these macro-BSs depend on the locations of these macro-BSs and are thus dependent). For analytical tractability, we assume that different potential IN macro-BSs of $u_{0}$ select $u_{0}$ for IN independently. Using \emph{independent thinning}, $\Phi_{1C}$ can be \emph{approximated} by a homogeneous PPP with density $p_{\bar{c}}\left(U,T_{1},T_{2}\right)\lambda_{1}$, where $p_{\bar{c}}\left(U,T_{1},T_{2}\right)\define1-p_{c}\left(U,T_{1},T_{2}\right)$.

\section{Coverage Probability Analysis}
In this section, we investigate the coverage probability in the general and small SIR threshold regimes, respectively.

\subsection{General SIR Threshold Regime}\label{subsec:CP_general}
As discussed in Section \ref{subset:user_asso}, the typical user $u_{0}$ is in one of two disjoint user sets: $\mathcal{U}_{1}$ and $\mathcal{U}_{2}$. By utilizing tools from stochastic geometry and the preliminary results obtained in Section \ref{subsec:prelim}, we have the following theorem:
%\footnote{Due to page limit, we omit all the proofs. Please refer to \cite{wuISIT15} for details.}

\begin{figure*}[!t]
\begin{align}
&\mathcal{S}_{1}(\beta,U,T_{1},T_{2})=\sum_{u=0}^{U}{\rm Pr}\left(u_{{\rm IN},0}=u\right)\int_{0}^{\infty}\sum_{n=0}^{N_{1}-u-1}\frac{1}{n!}\sum_{\left(n_{a}\right)_{a=1}^{3}\in\mathcal{N}_{n}}\binom{n}{n_{1},n_{2},n_{3}}\mathcal{\tilde{L}}^{(n_{1})}_{I_{1,1C}}\left(\beta y^{\alpha_{1}},y,T^{\frac{1}{\alpha_{1}}}y\right)\mathcal{\tilde{L}}^{(n_{2})}_{I_{1,1O}}\left(\beta y^{\alpha_{1}},T_{1}^{\frac{1}{\alpha_{1}}}y\right)\notag\\
&\hspace{6.6cm}\times\mathcal{\tilde{L}}^{(n_{3})}_{I_{1,2}}\left(\beta\frac{P_{2}}{P_{1}}y^{\alpha_{1}},\left(\frac{P_{2}}{P_{1}}\right)^{\frac{1}{\alpha_{2}}}y^{\frac{\alpha_{1}}{\alpha_{2}}}\right)f_{Y_{1}}(y){\rm d}y\label{eq:CP1}\\
&\mathcal{S}_{2}(\beta,U,T_{1},T_{2})=\int_{0}^{\infty}\sum_{n=0}^{N_{2}-1}\frac{1}{n!}\sum_{(n_{a})_{a=1}^{3}\in\mathcal{N}_{n}}\binom{n}{n_{1},n_{2},n_{3}}\mathcal{\tilde{L}}^{(n_{1})}_{I_{2,1C}}\left(\beta\frac{P_{1}}{P_{2}}y^{\alpha_{2}},\left(\frac{P_{1}}{P_{2}}\right)^{\frac{1}{\alpha_{1}}}y^{\frac{\alpha_{2}}{\alpha_{1}}},\left(\frac{P_{1}T_{2}}{P_{2}}\right)^{\frac{1}{\alpha_{1}}}y^{\frac{\alpha_{2}}{\alpha_{1}}}\right)\notag\\
&\hspace{6cm}\times\mathcal{\tilde{L}}^{(n_{2})}_{I_{2,1O}}\left(\beta\frac{P_{1}}{P_{2}}y^{\alpha_{2}},\left(\frac{P_{1}T_{2}}{P_{2}}\right)^{\frac{1}{\alpha_{1}}}y^{\frac{\alpha_{2}}{\alpha_{1}}}\right)\mathcal{\tilde{L}}^{(n_{3})}_{I_{2,2}}\left(\beta y^{\alpha_{2}},y\right)f_{Y_{2}}(y){\rm d}y\label{eq:CP2}\\
&\tilde{\mathcal{L}}_{I_{j,1C}}^{(n)}(s,r_{j,1C},r_{j,1O})=\mathcal{L}_{I_{j,1C}}(s,r_{j,1C},r_{j,1O})\sum_{(m_{a})_{a=1}^{n}\in\mathcal{M}_{n}}\frac{n!}{\prod_{a=1}^{n}m_{a}!}\notag\\
&\hspace{0mm}\times\prod_{a=1}^{n}\left(\frac{2\pi}{\alpha_1}p_{\bar{c}}\left(U,T_{1},T_{2}\right)\lambda_{1}s^{\frac{2}{\alpha_{1}}}\left(B^{'}\left(1+\frac{2}{\alpha_{1}},a-\frac{2}{\alpha_{1}},\frac{1}{1+sr_{j,1C}^{-\alpha_{1}}}\right)-B^{'}\left(1+\frac{2}{\alpha_{1}},a-\frac{2}{\alpha_{1}},\frac{1}{1+sr_{j,1O}^{-\alpha_{1}}}\right)\right)\right)^{m_{a}}\label{eq:LTdiff_I1in}\\
&\mathcal{\tilde{L}}^{(n)}_{I_{j,k}}(s,r_{j,k})=\mathcal{L}_{I_{j,k}}(s,r_{j,k})\sum_{(m_{a})_{a=1}^{n}\in\mathcal{M}_{n}}\frac{n!}{\prod_{a=1}^{n}m_{a}!}\prod_{a=1}^{n}\left(\frac{2\pi\lambda_{J(k)}}{\alpha_{J(k)}}s^{\frac{2}{\alpha_{J(k)}}}B^{'}\left(1+\frac{2}{\alpha_{J(k)}},a-\frac{2}{\alpha_{J(k)}},\frac{1}{1+\frac{s}{r_{j,k}^{\alpha_{J(k)}}}}\right)\right)^{m_{a}}\label{eq:LTdiff_1O2}\\
&\mathcal{L}_{I_{j,1C}}(s,r_{j,1C},r_{j,1O})=\exp\left(-\frac{2\pi}{\alpha_{1}}p_{\bar{c}}\left(U,T_{1},T_{2}\right)\lambda_{1}s^{\frac{2}{\alpha_{1}}}\left(B^{'}\left(\frac{2}{\alpha_{1}},1-\frac{2}{\alpha_{1}},\frac{1}{1+sr_{j,1C}^{-\alpha_{1}}}\right)-B^{'}\left(\frac{2}{\alpha_{1}},1-\frac{2}{\alpha_{1}},\frac{1}{1+sr_{j,1O}^{-\alpha_{1}}}\right)\right)\right)\label{eq:LT_1in}\\
&\mathcal{L}_{I_{j,k}}(s,r_{j,k})=\exp\left(-\frac{2\pi\lambda_{J(k)}}{\alpha_{J(k)}}s^{\frac{2}{\alpha_{J(k)}}}B^{'}\left(\frac{2}{\alpha_{J(k)}},1-\frac{2}{\alpha_{J(k)}},\frac{1}{1+\frac{s}{r_{j,k}^{\alpha_{J(k)}}}}\right)\right)\label{eq:LT_1O2}
%&\mathcal{A}_{1}=2\pi\lambda_{1}\int_{0}^{\infty}z\exp\left(-\pi\lambda_{1}z^{2}\right)\exp\left(-\pi\lambda_{2}\left(\frac{P_{2}}{P_{1}}\right)^{\frac{2}{\alpha_{2}}}z^{\frac{2\alpha_{1}}{\alpha_{2}}}\right){\rm d}z\label{eq:A1}\\
%&\mathcal{A}_{2}=2\pi\lambda_{2}\int_{0}^{\infty}z\exp\left(-\pi\lambda_{2}z^{2}\right)\exp\left(-\pi\lambda_{1}\left(\frac{P_{1}}{P_{2}}\right)^{\frac{2}{\alpha_{1}}}z^{\frac{2\alpha_{2}}{\alpha_{1}}}\right){\rm d}z\label{eq:A2}
\end{align}
\normalsize \hrulefill
\end{figure*}

\begin{theorem}[Coverage Probability]\label{thm:overall_CP}
Under design parameters $U$, $T_{1}$ and $T_{2}$, we have
%$1)$ coverage probability of a macro-user: $\mathcal{S}_{1}\left(\beta,U,T_{1},T_{2}\right)\define{\rm Pr}\left({\rm SIR}_{0}>\beta|u_{0}\in\mathcal{U}_{1}\right)$, given in Appendix,\\
%$2)$ coverage probability of a pico-user: $\mathcal{S}_{2}\left(\beta,U,T_{1},T_{2}\right)\define{\rm Pr}\left({\rm SIR}_{0}>\beta|u_{0}\in\mathcal{U}_{2}\right)$, given in Appendix,\\
%$3)$ overall coverage probability $\mathcal{S}\left(\beta,U,T_{1},T_{2}\right)=\mathcal{A}_{1}\mathcal{S}_{1}(\beta,U,T_{1},T_{2})+\mathcal{A}_{2}\mathcal{S}_{2}(\beta,U,T_{1},T_{2})$. \\
\begin{enumerate}
\item coverage probability of a macro-user: $\mathcal{S}_{1}\left(\beta,U,T_{1},T_{2}\right)\define{\rm Pr}\left({\rm SIR}_{0}>\beta|u_{0}\in\mathcal{U}_{1}\right)$, given in (\ref{eq:CP1}) at the top of the next page,
\item coverage probability of a pico-user: $\mathcal{S}_{2}\left(\beta,U,T_{1},T_{2}\right)\define{\rm Pr}\left({\rm SIR}_{0}>\beta|u_{0}\in\mathcal{U}_{2}\right)$, given in (\ref{eq:CP2}) at the top of the next page,
\item overall coverage probability $\mathcal{S}\left(\beta,U,T_{1},T_{2}\right)=\mathcal{A}_{1}\mathcal{S}_{1}(\beta,U,T_{1},T_{2})+\mathcal{A}_{2}\mathcal{S}_{2}(\beta,U,T_{1},T_{2})$, where 
% $\mathcal{A}_{j}\define {\rm Pr}\left(u_{0}\in\mathcal{U}_{j}\right)$ ($j=1,2$) 
$\mathcal{A}_{1}$ and $\mathcal{A}_{2}$ are given in (\ref{eq:A1}) and (\ref{eq:A2}) at the top of the next page.
\end{enumerate}
Here, $\mathcal{\tilde{L}}^{(n)}_{I_{j,1C}}\left(s,r_{j,1C},r_{j,1O}\right)$ and $\mathcal{\tilde{L}}^{(n)}_{I_{j,k}}(s,r_{j,k})$ ($k\in\{1O,2\}$) are given in (\ref{eq:LTdiff_I1in}) and (\ref{eq:LTdiff_1O2}) at the top of the next page, respectively,
%\begin{align}
%&\mathcal{\tilde{L}}^{(n)}_{I_{k,j}}(s,r)=\mathcal{L}_{I_{k,j}}(s,r)\sum_{(m_{a})_{a=1}^{n}\in\mathcal{M}_{n}}\frac{n!}{\prod_{a=1}^{n}m_{a}!}\prod_{a=1}^{n}\left(\frac{2\pi}{\alpha_{J(k)}}\right)^{m_{a}}\notag\\
%&\hspace{-2mm}\times\prod_{a=1}^{n}\left(\lambda_{J(k)}s^{\frac{2}{\alpha_{J(k)}}}B^{'}\left(1+\frac{2}{\alpha_{J(k)}},a-\frac{2}{\alpha_{J(k)}},\frac{1}{1+\frac{s}{r^{\alpha_{J(k)}}}}\right)\right)^{m_{a}}
%\end{align}
where\footnote{$\mathcal{L}_{I}(s,r)$ denotes the Laplace transform of the aggregated interference $I$.} $\mathcal{L}_{I_{j,1C}}\left(s,r_{j,1C},r_{j,1O}\right)$ and $\mathcal{L}_{I_{j,k}}(s,r_{j,k})$ are given in (\ref{eq:LT_1in}) and (\ref{eq:LT_1O2}) at the top of the next page (with $J(1O)=1$ and $J(2)=2$), respectively. Moreover, $B^{'}(a,b,z)\define\int_{z}^{1}u^{a-1}(1-u)^{b-1}{\rm d}u$ ($0<z<1$), $\mathcal{N}_{n}\define\{(n_{a})_{a=1}^{3}|n_{a}\in\mathbb{N}^{0},\sum_{a=1}^{3}n_{a}=n\}$, and $\mathcal{M}_{n}\define\{(m_{a})_{a=1}^{n}|m_{a}\in\mathbb{N}^{0},\sum_{a=1}^{3}a\cdot m_{a}=n\}$.
\end{theorem}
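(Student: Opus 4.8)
The plan is to prove both conditional coverage probabilities by tier-conditioning followed by an integer-shape Gamma tail expansion, and then to assemble the overall result. The decomposition $\mathcal{S}\left(\beta,U,T_{1},T_{2}\right)=\mathcal{A}_{1}\mathcal{S}_{1}+\mathcal{A}_{2}\mathcal{S}_{2}$ is just the law of total probability over the disjoint events $\{u_{0}\in\mathcal{U}_{j}\}$, whose probabilities $\mathcal{A}_{j}$ are (\ref{eq:A1})--(\ref{eq:A2}), so the real work is $\mathcal{S}_{j}={\rm Pr}\left({\rm SIR}_{j,0}>\beta\mid u_{0}\in\mathcal{U}_{j}\right)$. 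First I would condition on $u_{{\rm IN},0}=u$ (p.m.f. from \emph{Lemma \ref{lem:pmf_uIN0}}) and on the serving distance $Y_{j}=y$ (density (\ref{eq:pdfY1})--(\ref{eq:pdfY2})). Conditioned on these, the numerator $|\mathbf{h}_{j,00}^{\dagger}\mathbf{f}_{j,0}|^{2}\dis{\rm Gamma}(M_{j},1)$ with $M_{1}=N_{1}-u$, $M_{2}=N_{2}$ is independent of the interference, so writing $s\define\beta y^{\alpha_{j}}/P_{j}$ and $I\define P_{1}I_{j,1C}+P_{1}I_{j,1O}+P_{2}I_{j,2}$, the Gamma CCDF for integer shape gives ${\rm Pr}({\rm SIR}_{j,0}>\beta\mid\cdots)={\rm E}_{I}\big[\sum_{n=0}^{M_{j}-1}\tfrac{(sI)^{n}}{n!}e^{-sI}\big]$, which explains the outer sum running to $M_{j}-1=N_{j}-u-1$ in (\ref{eq:CP1})--(\ref{eq:CP2}).

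The key step is the identity ${\rm E}[I^{n}e^{-sI}]=(-1)^{n}\frac{{\rm d}^{n}}{{\rm d}s^{n}}\mathcal{L}_{I}(s)$, which turns the conditional coverage probability into a weighted sum of derivatives of the total-interference Laplace transform. Because $\Phi_{1C}$, $\Phi_{1O}$, $\Phi_{2}$ are treated as independent (under the thinning/independence approximations of Section \ref{subsec:prelim}), $\mathcal{L}_{I}(s)$ factorizes into a product of three component transforms, each carrying its own power scaling ($sP_{1}$ for the two macro terms, $sP_{2}$ for the pico term), which is exactly why the first arguments in (\ref{eq:CP1})--(\ref{eq:CP2}) read $\beta y^{\alpha_{1}}$ and $\beta\frac{P_{2}}{P_{1}}y^{\alpha_{1}}$ for a macro-user and the mirror-image scalings for a pico-user. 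Applying the Leibniz (multinomial) rule to the $n$-th derivative of this triple product yields the sum over $\mathcal{N}_{n}$ and the coefficient $\binom{n}{n_{1},n_{2},n_{3}}$, with the $(-s)^{n}$ weight distributing as $(-s)^{n_{a}}$ onto each factor; this is precisely the normalization $\mathcal{\tilde{L}}^{(n_{a})}(s)=(-s)^{n_{a}}\frac{{\rm d}^{n_{a}}}{{\rm d}s^{n_{a}}}\mathcal{L}(s)$, leaving only $\tfrac{1}{n!}$ in the outer weight.

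Next I would compute each component transform via the PPP probability generating functional. With Rayleigh fading each interferer contributes $\frac{1}{1+sr^{-\alpha}}$, and the radial integral over the appropriate interferer region produces the incomplete-Beta expressions $B^{'}(\cdot)$ of (\ref{eq:LT_1in})--(\ref{eq:LT_1O2}). The crucial geometric bookkeeping is the interferer region per tier: for a macro-user the potential IN macro-BSs occupy the \emph{annulus} $y<|D_{1,\ell0}|<T_{1}^{1/\alpha_{1}}y$ (closer than the serving BS is impossible, farther than $T_{1}^{1/\alpha_{1}}y$ fails the SIIR criterion), the remaining macro-BSs lie beyond $T_{1}^{1/\alpha_{1}}y$, and the pico-BSs lie beyond $(P_{2}/P_{1})^{1/\alpha_{2}}y^{\alpha_{1}/\alpha_{2}}$ by the association rule, with the pico-user case symmetric. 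The annulus is what makes $\mathcal{L}_{I_{j,1C}}$ a \emph{difference} of two Beta terms evaluated at inner and outer radii, and $\Phi_{1C}$ carries the thinned density $p_{\bar{c}}(U,T_{1},T_{2})\lambda_{1}$ from \emph{Lemma \ref{lem:INprob}}. Since each $\mathcal{L}_{I_{j,k}}(s)=e^{g(s)}$ is an exponential, its derivatives follow from Fa\`a di Bruno / complete Bell polynomials, giving exactly the sum over $\mathcal{M}_{n}$ (with $\sum_{a}a\,m_{a}=n$) and the factor $\frac{n!}{\prod_{a}m_{a}!}$ in (\ref{eq:LTdiff_I1in})--(\ref{eq:LTdiff_1O2}). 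Finally I would integrate over $y$ against $f_{Y_{j}}$ and sum over $u$ against \emph{Lemma \ref{lem:pmf_uIN0}} to assemble (\ref{eq:CP1})--(\ref{eq:CP2}).

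I expect the main obstacle to be the $\Phi_{1C}$ term: deriving its annular limits from the SIIR-based potential-IN criterion and the association constraint simultaneously, justifying that the residual-interference field after IN selection is well approximated by independent thinning at rate $p_{\bar{c}}$, and then carrying this annulus faithfully through both the PGFL (producing the Beta \emph{difference}) and the Bell-polynomial differentiation. The remaining manipulations — the Gamma tail expansion, the Leibniz product rule, and the macro/pico association bookkeeping — are essentially routine once this piece is in place.
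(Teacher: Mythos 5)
Your proposal is correct and takes essentially the same route as the paper's proof (Appendix~\ref{proof:thm1}): conditioning on $u_{{\rm IN},0}$ and $Y_{j}$, the integer-shape Gamma tail expansion with the multinomial decomposition over $\mathcal{N}_{n}$, the PGFL evaluation of each component Laplace transform over the regions of Table~\ref{tab:para_B2larger} (in particular the annulus $[r_{j,1C},r_{j,1O}]$ with thinned density $p_{\bar{c}}\left(U,T_{1},T_{2}\right)\lambda_{1}$ for $\Phi_{1C}$, producing the Beta difference), and Fa\`a di Bruno's formula for the higher-order derivatives yielding the $\mathcal{M}_{n}$ sums. The only cosmetic difference is that you apply the Leibniz rule to the $n$th derivative of the product of transforms, whereas the paper first expands $\left(I_{j,1C}+I_{j,1O}+I_{j,2}\right)^{n}$ multinomially and then converts each moment term to a transform derivative via ${\rm E}\left[I^{n}e^{-sI}\right]=(-1)^{n}\mathcal{L}^{(n)}_{I}(s)$ --- these are equivalent.
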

\begin{proof}
See Appendix \ref{proof:thm1}.
\end{proof}

\emph{Theorem \ref{thm:overall_CP}} allows us to easily evaluate the coverage probability  in a numerical way. Fig.\ \ref{fig:CPvsT} plots the coverage probability versus the IN thresholds $T_{1}$ and $T_{2}$. We see from Fig.\ \ref{fig:CPvsT} that the ``Analytical" curves, which are plotted using $\mathcal{S}\left(\beta,U,T_{1},T_{2}\right)$ in \emph{Theorem \ref{thm:overall_CP}}, are reasonably close to the ``Monte Carlo" curves (the error is no larger than $5.3\%$), although \emph{Theorem \ref{thm:overall_CP}} is obtained based on some approximations (cf. Section \ref{subsec:prelim}). Compared to the non-IN scheme, we observe that the IN scheme can achieve a good coverage probability gain (up to $18.6\%$). Moreover, we see that the coverage probability depends on both $T_{1}$ and $T_{2}$. Specifically, when $T_{1}$ ($T_{2}$) is small, increasing $T_{1}$ ($T_{2}$) increases the chance of macro-BSs performing IN to near users; while when $T_{1}$ ($T_{2}$) is large, increasing $T_{1}$ ($T_{2}$) reduces the chance of macro-BSs performing IN to near users, as the resource is wasted in performing IN to users far away. 
%increasing $T_{2}$ ($T_{1}$) can increase the number of the potential IN macro-BSs of each pico-user (or macro-user). However, when $T_{2}$ ($T_{1}$) is large, each macro-BS has many IN users, which requires it to use the maximum DoF for IN; in addition, the potential IN macro-BS of each pico-user (or macro-user) may be far away, which reduces the benefits of IN. 
Further, we see that the coverage probability can be improved by jointly adjusting $T_{1}$ and $T_{2}$. 

%achieved when $T_{1}=2T_{2}$ is larger than that achieved when $T_{1}=T_{2}$. This is because (under the parameters used for plotting Fig.\ \ref{fig:CPvsT}) the macro-users need more chance for IN due to the sacrifice of DoF for IN at their serving BSs.

\begin{figure}[t]
\centering
\subfigure[$U=9$, $\beta=10$ dB]{
\includegraphics[height=0.6\columnwidth,width=0.665\columnwidth]
{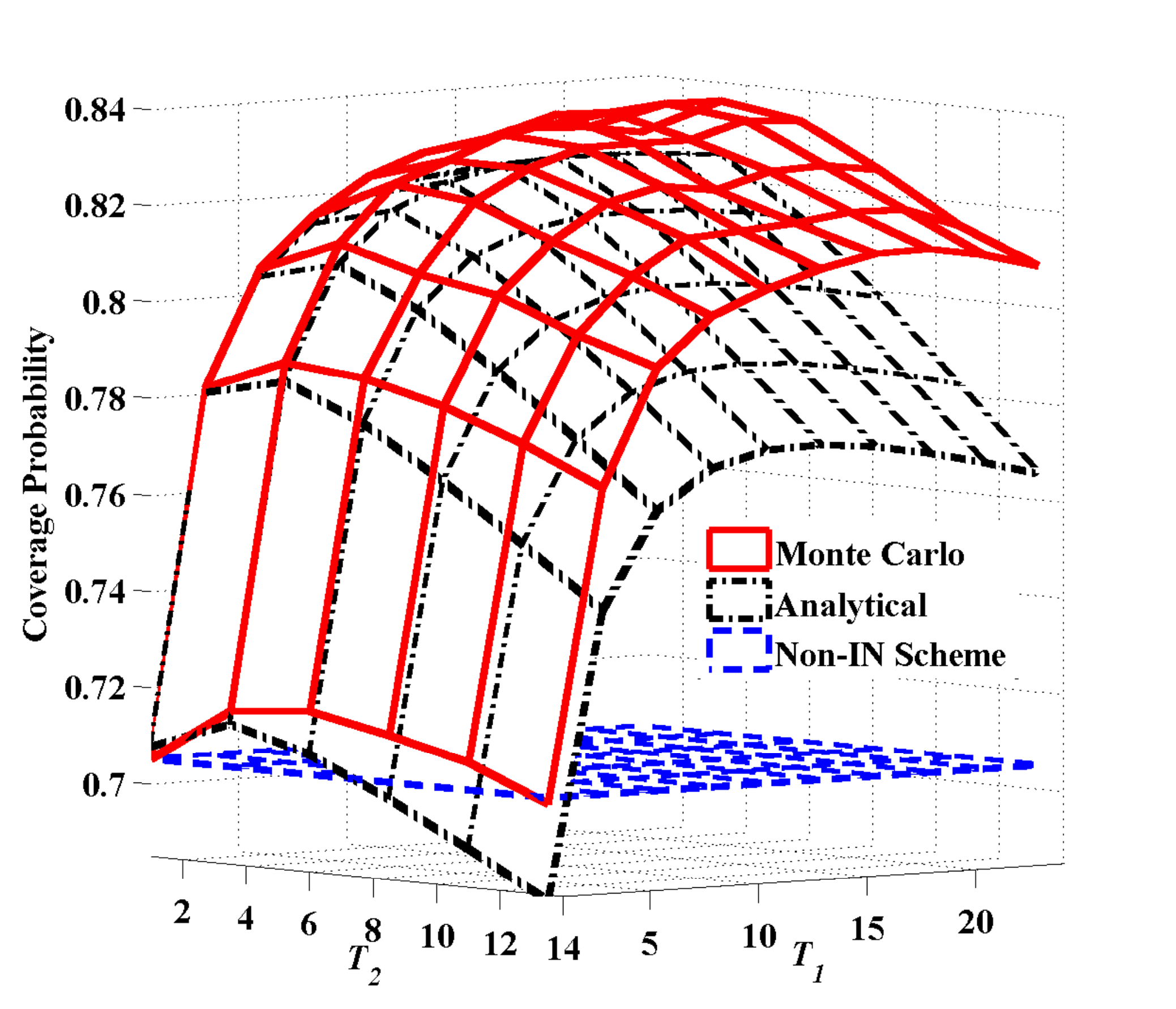}
\label{fig:CPvsT}
}
\subfigure[$\:$]{
\includegraphics[height=0.6\columnwidth,width=0.665\columnwidth]
{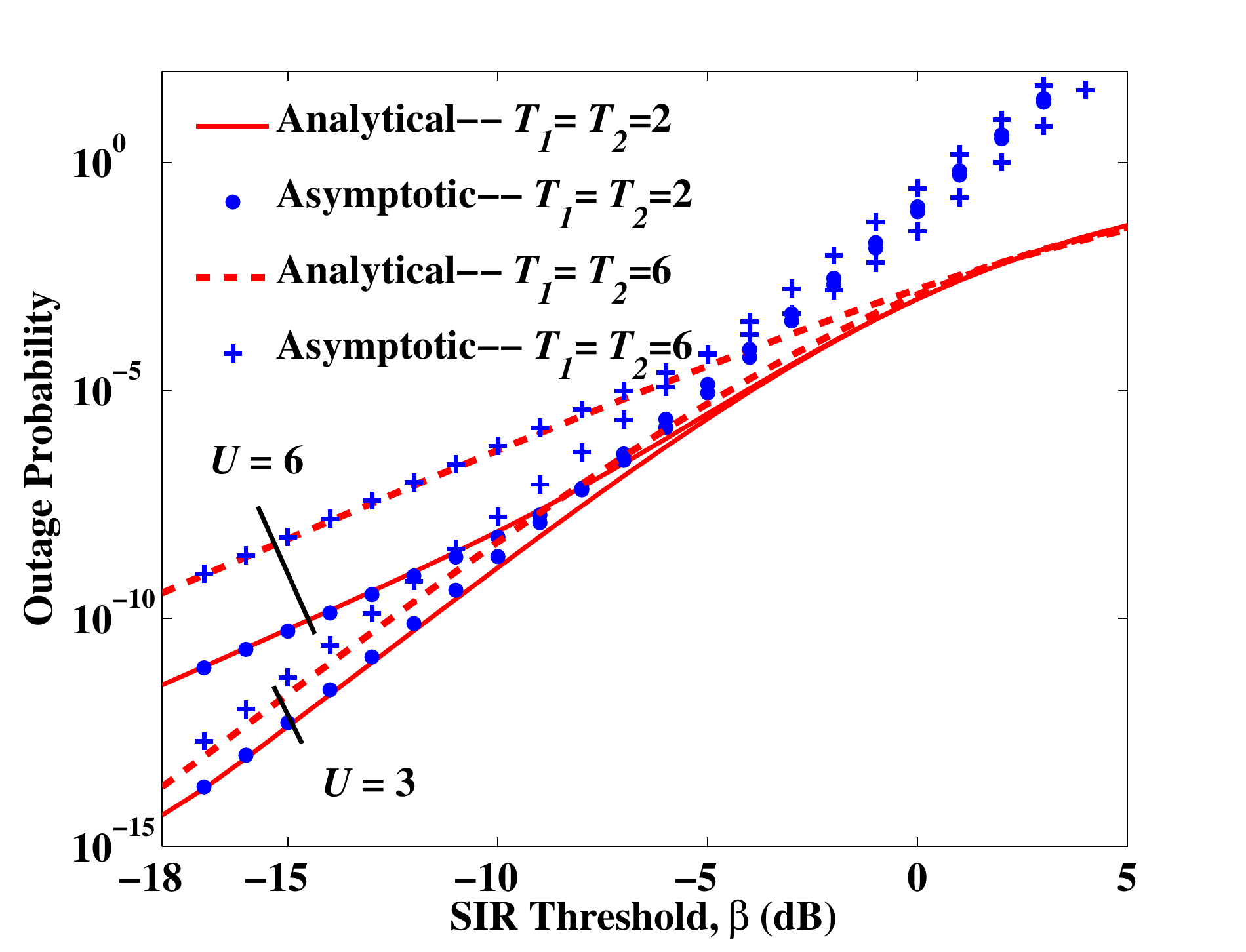}
\label{fig:OPvsbeta_smallbeta}
}
\caption{Coverage and outage Probability at $N_{1}=10$, $N_{2}=8$, $\alpha_{1}=4.5$, $\alpha_{2}=4.7$, $\frac{P_{1}}{P_{2}}=15$ dB, $\lambda_{1}=0.0005$ nodes/m$^{2}$, and $\lambda_{2}=0.001$ nodes/m$^{2}$.}\label{fig:CP_OP}
\vspace{-5mm}
\end{figure}

\subsection{Small SIR Threshold Regime}
To obtain more insights on the impact of the design parameters, in this part, we investigate the asymptotic behavior of the IN scheme in the high reliability regime where $\beta\to0$.\footnote{Note that small $\beta$ is a practical regime. For example, in current standards (e.g., 3GPP LTE), $\beta$ is very small (e.g., -7 dB) \cite{zhang14}. In addition, in wide-band systems (e.g., CDMA and UWB), $\beta$ is usually small \cite{zhang14}.}

\subsubsection{Asymptotic Outage Probability Analysis}
In this part, we analyze the asymptotic outage probability when $\beta\to0$. First, we define the \emph{order gain} of the outage probability \cite{zhang14}:
\begin{align}
d\define \lim_{\beta\to0}\frac{\log {\rm Pr}\left({\rm SIR}_{0}<\beta\right)}{\log\beta}\;.
\end{align}
Recently, a tractable approach has been proposed in \cite{haenggi14} to characterize the order gain for a class of communication schemes in wireless networks which satisfy certain conditions. However, this approach does not provide tractable analytical expressions for the \emph{coefficient} of the asymptotic outage probability, i.e., 
\begin{align}
\lim_{\beta\to0}\frac{{\rm Pr}\left({\rm SIR_{0}}<\beta\right)}{\beta^{d}}
\end{align}
%$\lim_{\beta\to0}\frac{{\rm Pr}\left({\rm SIR_{0}}<\beta\right)}{\beta^{d}}$, 
for most of the schemes using multiple antennas in the class. By utilizing series expansion of some special functions and dominated convergence theorem, we characterize both the order gain and the coefficient of the asymptotic outage probability of the IN scheme in multi-antenna HetNets, which are presented as follows.
\begin{theorem}[Asymptotic Outage Probability]\label{them:CPj_lowbeta}
Under design parameters $U$, $T_{1}$ and $T_{2}$, when $\beta\to0$, we have\footnote{$f(x)\sim g(x)$ means $\lim_{x\to0}\frac{f(x)}{g(x)}=1$.}
\begin{enumerate}
\item outage probability of a macro-user: $1-\mathcal{S}_{1}(\beta,U,T_{1},T_{2})\sim b_{1}\left(U,T_{1},T_{2}\right)\beta^{N_{1}-U}$,
%, where $U_{1}=U$ if $T_{1}>1$ and $T_{2}>1$, and $U_{1}=0$ if $T_{1}=T_{2}=1$;
\item outage probability of a pico-user: $1-\mathcal{S}_{2}(\beta,U,T_{1},T_{2})\sim b_{2}\left(U,T_{1},T_{2}\right)\beta^{N_{2}}$,
\item overall outage probability: $1-\mathcal{S}\left(\beta,U,T_{1},T_{2}\right)\sim b\left(U,T_{1},T_{2}\right)\beta^{\min\{N_{1}-U,N_{2}\}}$, where
%$b\left(U,T_{1},T_{2}\right)=$
\begin{align}
&b\left(U,T_{1},T_{2}\right)\notag\\
=&
\begin{cases}
&\hspace{-2mm}b_{2}\left(U,T_{1},T_{2}\right),\hspace{23.2mm}\;U<N_{1}-N_{2}\\
&\hspace{-2mm}b_{1}\left(U,T_{1},T_{2}\right)+b_{2}\left(U,T_{1},T_{2}\right),\;U=N_{1}-N_{2}\\
&\hspace{-2mm}b_{1}\left(U,T_{1},T_{2}\right),\hspace{23.5mm}\;U>N_{1}-N_{2}
\end{cases}
.\notag
\end{align}
\end{enumerate}
%\footnote{$1-\mathcal{S}_{j}(\beta,U,T_{1},T_{2})$ ($j=1,2$) indicates the outage probability of $u_{0}\in\mathcal{U}_{j}$, and $1-\mathcal{S}(\beta,U,T_{1},T_{2})$ indicates the overall outage probability.}
%\begin{enumerate}
%\item outage probability of a macro-user: $1-\mathcal{S}_{1}(\beta,U,T_{1},T_{2})\sim b_{1,\beta\to0}\beta^{N_{1}-U}$;
%%, where $U_{1}=U$ if $T_{1}>1$ and $T_{2}>1$, and $U_{1}=0$ if $T_{1}=T_{2}=1$;
%\item outage probability of a pico-user: $1-\mathcal{S}_{2}(\beta,U,T_{1},T_{2})\sim b_{2,\beta\to0}\beta^{N_{2}}$;
%\item overall outage probability: $1-\mathcal{S}\left(\beta,U,T_{1},T_{2}\right)\sim b_{j^{*},\beta\to0}\beta^{N_{j^{*}}-U_{j^{*}}}$, where $j^{*}=\arg\:\min_{j\in\{1,2\}}\:\left(N_{j}-U_{j}\right)$ with $U_{1}=U$ and $U_{2}=0$.
%\end{enumerate}
Here, $b_{j}\left(U,T_{1},T_{2}\right)$ is given in (\ref{eq:coeff_smallbeta}) at the top of the next page with 
\begin{align}
U_{j}=
\begin{cases}
&U,\;\;{\rm if}\; j=1, T_{1}>1, T_{2}>1\\
&0,\;\;{\rm otherwise}
\end{cases}
\;,
\end{align}
and
\begin{align}
\mathcal{P}_{j}=
\begin{cases}
&{\rm Pr}\left(u_{{\rm IN},0}=U\right),\;\;{\rm if}\; j=1, T_{1}>1, T_{2}>1\\
&1,\;\;{\rm otherwise}
\end{cases}
\;.
\end{align}
%$U_{j}=U$, $\mathcal{P}_{j}={\rm Pr}\left(u_{{\rm IN},0}=U\right)$ if $j=1$ and $T_{1}>1$, $T_{2}>1$; $U_{j}=0$, $\mathcal{P}_{j}=1$, otherwise. 
Moreover, $b_{2}\left(U,T_{1},T_{2}\right)$ decreases with $U$.
\begin{figure*}[!t]
\begin{align}\label{eq:coeff_smallbeta}
b_{j}\left(U,T_{1},T_{2}\right)&=\sum_{(n_{a})_{a=1}^{3}\in\mathcal{N}_{N_{j}-U_{j}}}\sum_{(m_{a})_{a=1}^{n_{1}}\in\mathcal{M}_{n_{1}}}\sum_{(p_{a})_{a=1}^{n_{2}}\in\mathcal{M}_{n_{2}}}\sum_{(q_{a})_{a=1}^{n_{3}}\in\mathcal{M}_{n_{3}}}\left(\int_{0}^{\infty}y^{\frac{2\alpha_{j}}{\alpha_{1}}\left(\sum_{a=1}^{n_{1}}m_{a}+\sum_{a=1}^{n_{2}}p_{a}\right)+\frac{2\alpha_{j}}{\alpha_{2}}\sum_{a=1}^{n_{2}}q_{a}}f_{Y_{j}}(y){\rm d}y\right)\notag\\
&\times \left(\prod_{a=1}^{n_{1}}\frac{1}{m_{a}!}\left(\frac{\frac{2}{\alpha_{1}}\pi \lambda_{1}}{a-\frac{2}{\alpha_{1}}}\left(\frac{P_{1}}{P_{j}}\right)^{\frac{2}{\alpha_{1}}}\right)^{m_{a}}\right)\left(\prod_{a=1}^{n_{2}}\frac{1}{p_{a}!}\left(\frac{\frac{2}{\alpha_{1}}\pi\lambda_{1}}{a-\frac{2}{\alpha_{1}}}\left(\frac{P_{1}}{P_{j}}\right)^{\frac{2}{\alpha_{1}}}\right)^{p_{a}}\right)\left(\prod_{a=1}^{n_{3}}\frac{1}{q_{a}!}\left(\frac{\frac{2}{\alpha_{2}}\pi\lambda_{2}}{a-\frac{2}{\alpha_{2}}}\left(\frac{P_{2}}{P_{j}}\right)^{\frac{2}{\alpha_{2}}}\right)^{q_{a}}\right)\notag\\
&\times \left(\prod_{a=1}^{n_{1}}\left(p_{\bar{c}}\left(U,T_{1},T_{2}\right)\left(1-\left(\frac{1}{T_{j}}\right)^{a-\frac{2}{\alpha_{1}}}\right)\right)^{m_{a}}\right)\left(\prod_{a=1}^{n_{2}}\left(\left(\frac{1}{T_{j}}\right)^{a-\frac{2}{\alpha_{1}}}\right)^{p_{a}}\right)\mathcal{P}_{j}
\end{align}
\normalsize \hrulefill
\end{figure*}
\end{theorem}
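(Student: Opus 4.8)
The plan is to start from the exact expressions for $\mathcal{S}_1$ and $\mathcal{S}_2$ in (\ref{eq:CP1})--(\ref{eq:CP2}) and extract the leading behaviour of the outage $1-\mathcal{S}_j$ as $\beta\to0$. The organizing idea is probabilistic: conditioned on the serving distance $Y_j$, on the aggregate interference, and (for a macro-user) on $u_{\rm IN,0}=u$, the coverage event is $\{G>x_j\}$ with $G\dis{\rm Gamma}(M_j,1)$, $M_1=N_1-u$, $M_2=N_2$, and $x_j=\beta\,\frac{Y_j^{\alpha_j}}{P_j}\left(P_1 I_{j,1C}+P_1 I_{j,1O}+P_2 I_{j,2}\right)$ proportional to $\beta$. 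Using the elementary identity $1-e^{-x}\sum_{n=0}^{M-1}x^{n}/n!=\gamma(M,x)/(M-1)!\sim x^{M}/M!$ as $x\to0$, the conditional outage is, to leading order, $x_j^{M_j}/M_j!$. Hence $1-\mathcal{S}_j\sim\frac{\beta^{M_j}}{M_j!}\,{\rm E}\left[\left(\frac{Y_j^{\alpha_j}}{P_j}(P_1 I_{j,1C}+P_1 I_{j,1O}+P_2 I_{j,2})\right)^{M_j}\right]$, which already delivers the exponent $N_2$ for a pico-user and, after conditioning, $N_1-u$ for a macro-user.

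For the macro-user I would then sum over $u$ weighted by ${\rm Pr}(u_{\rm IN,0}=u)$ from \emph{Lemma \ref{lem:pmf_uIN0}}; since each summand scales as $\beta^{N_1-u}$, the smallest exponent $\beta^{N_1-U}$ dominates and is attained at $u=U$ whenever ${\rm Pr}(u_{\rm IN,0}=U)>0$. This is precisely the case $T_1>1,T_2>1$ encoded by $U_1=U$ and $\mathcal{P}_1={\rm Pr}(u_{\rm IN,0}=U)$; the degenerate thresholds, where that probability or the residual-interference contribution collapses, are exactly the ``otherwise'' branch with $U_j=0$ and $\mathcal{P}_j=1$, and I would treat them by recomputing the leading order from the next surviving term. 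The coefficient itself is the moment of order $N_j-U_j$ of a sum of three independent shot noises, which I would evaluate with the standard moment (Fa\`a di Bruno / Bell-polynomial) formula: the outer multinomial over $(n_a)_{a=1}^{3}\in\mathcal{N}_{N_j-U_j}$ splits the three interference types $I_{j,1C},I_{j,1O},I_{j,2}$, and each inner sum over $\mathcal{M}_{n_a}$ accounts for partitions into parts of size $a$, each part contributing a radial Campbell integral. Carrying out those radial integrals over the regions fixed by the IN thresholds and the association exclusion regions produces the factors $(P_1/P_j)^{2/\alpha_1}$, $(P_2/P_j)^{2/\alpha_2}$, $1/(a-2/\alpha_1)$ and $1/(a-2/\alpha_2)$, together with $p_{\bar{c}}(U,T_1,T_2)\bigl(1-(1/T_j)^{a-2/\alpha_1}\bigr)$ from $I_{j,1C}$ and $(1/T_j)^{a-2/\alpha_1}$ from $I_{j,1O}$, reproducing (\ref{eq:coeff_smallbeta}); the leftover $Y_j^{\alpha_j M_j}$ and radial scalings combine into the $y$-power integrated against $f_{Y_j}$.

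The overall statement follows from $1-\mathcal{S}=\mathcal{A}_1(1-\mathcal{S}_1)+\mathcal{A}_2(1-\mathcal{S}_2)$ by retaining only the smaller of the exponents $N_1-U$ and $N_2$, giving the three-way split at $U=N_1-N_2$ with both coefficients kept at the boundary. For the monotonicity of $b_2$, I would isolate the only $U$-dependence, namely $p_{\bar{c}}(U,T_1,T_2)=1-p_c(U,T_1,T_2)$: by \emph{Lemma \ref{lem:INprob}}, $p_c={\rm E}[\min\{U/(K_0+1),1\}]$ is nondecreasing in $U$ pointwise in $K_0$ and strictly increasing because $K_0$ is Poisson with positive mean, so $p_{\bar{c}}$ strictly decreases; since $p_{\bar{c}}$ enters $b_2$ only through the nonnegative, nondecreasing factors $\bigl(p_{\bar{c}}(1-(1/T_2)^{a-2/\alpha_1})\bigr)^{m_a}$, the coefficient $b_2$ decreases with $U$.

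The hard part will be making the ``$\sim$'' rigorous, i.e.\ justifying the interchange of $\lim_{\beta\to0}$ with the $y$-integral in (\ref{eq:CP1})--(\ref{eq:CP2}) after dividing by the appropriate power of $\beta$. This is where the dominated convergence theorem is needed, and the real work is constructing a $\beta$-uniform integrable dominating function: one must bound the normalized integrand uniformly for small $s$ via the series expansion of $B^{'}(a,b,z)$ near $z=1$ (where $1-z\sim s/r^{\alpha}$, so $s^{2/\alpha}B^{'}(\cdot)$ has a finite limit and each $\tilde{\mathcal{L}}^{(n)}$ scales like a fixed power of $s$), and then check that the stretched-Gaussian decay of $f_{Y_j}$ in (\ref{eq:pdfY1})--(\ref{eq:pdfY2}) dominates the polynomial-in-$y$ growth introduced by the moments. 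Controlling the higher-order remainder of $\gamma(M,x)/(M-1)!$ uniformly, so that the subleading terms genuinely vanish after normalization, is the accompanying technical point.
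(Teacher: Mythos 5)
Your proposal is correct and follows essentially the same route as the paper: the paper likewise starts from the Gamma-tail identity $1-e^{-x}\sum_{n=0}^{M_{j}-1}x^{n}/n!=e^{-x}\sum_{n=M_{j}}^{\infty}x^{n}/n!$ conditioned on $Y_{j}$, extracts the $\beta^{M_{j}}$ leading term via the small-argument expansion $B^{'}\left(a,b,\frac{1}{1+c\beta}\right)\sim\frac{(c\beta)^{b}}{b}$ applied to the Laplace-transform derivatives (your direct shot-noise moment computation via Campbell/Fa\`a di Bruno yields exactly these limits, with the same partition sums over $\mathcal{M}_{n}$ and the same radial integrals producing the $1/(a-2/\alpha)$ and threshold factors), invokes the dominated convergence theorem for the interchange you flag as the hard step, and proves the monotonicity of $b_{2}$ by your identical argument that $b_{2}$ increases in $p_{\bar{c}}\left(U,T_{1},T_{2}\right)$ while $p_{c}$ from \emph{Lemma \ref{lem:INprob}} is nondecreasing in $U$. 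Your handling of the mixture over $u_{{\rm IN},0}$ (dominant term at $u=U$ with weight ${\rm Pr}\left(u_{{\rm IN},0}=U\right)$, giving $U_{1}=U$ and $\mathcal{P}_{1}$) and the three-way split over $\min\{N_{1}-U,N_{2}\}$ also mirror the paper's algebraic manipulations.
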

\begin{proof}
See Appendix \ref{proof:thm2}.
\end{proof}

Note that the IN scheme has three design parameters: the maximum IN DoF (i.e., $U$) and the IN thresholds (i.e., $T_{1}$ and $T_{2}$). From \emph{Theorem \ref{them:CPj_lowbeta}}, we clearly see that the maximum IN DoF and the IN thresholds affect the asymptotic behavior of the outage probability in dramatically different ways. Specifically, the maximum IN DoF $U$ can affect the order gain, while the IN thresholds can only affect the coefficient. In addition, we see that $U$ affects the order gain of the asymptotic outage probability through affecting the order gain of the asymptotic macro-user outage probability. Note that in this paper, IN is only performed at macro-BSs, and $U$ is the upper bound of the actual DoF for IN in the ZFBF precoder (which is \emph{random} due to the randomness of the network topology). Therefore, the result of the order gain in \emph{Theorem \ref{them:CPj_lowbeta}} extends the existing order gain result in \emph{single-tier} cellular networks where the DoF for IN in the ZFBF precoder is \emph{deterministic} \cite{huang13}.

Fig.\ \ref{fig:OPvsbeta_smallbeta} plots the outage probability versus SIR threshold $\beta$ for different design parameters. From Fig.\ \ref{fig:OPvsbeta_smallbeta}, we clearly see that the outage probability curves with the same maximum IN DoF $U$ have the same slope (indicating the same order gain). For the two outage probability curves with the same $U$ but different $T_{1}$ and $T_{2}$, we observe a shift between these two curves (indicating different coefficients). Therefore, Fig.\ \ref{fig:OPvsbeta_smallbeta} verifies \emph{Theorem \ref{them:CPj_lowbeta}}, and shows that the result in \emph{Theorem \ref{them:CPj_lowbeta}} can effectively reflect the outage probability for small $\beta$.

\subsubsection{Optimization of Maximum IN DoF}
From \emph{Theorem \ref{them:CPj_lowbeta}}, we know that $U$ has a larger impact on the asymptotic outage probability than the IN thresholds. In this part, we characterize the optimal maximum IN DoF which maximizes the order gain, and the optimal maximum IN DoF which minimizes the asymptotic outage probability, respectively. 

From \emph{Theorem \ref{them:CPj_lowbeta}}, we see that the order gain of the asymptotic outage probability is $\min\{N_{1}-U,N_{2}\}$. Thus, we also denote it as $d(U)$. First, we introduce the optimal design parameter $U^{*}_{d}$ which maximizes $d(U)$, i.e.,  
\begin{align}\label{eq:optU_order}
U^{*}_{d}&\define\arg\max_{U\in\{0,1,\ldots,N_{1}-1\}}d(U)\notag\\
&=\arg\max_{U\in\{0,1,\ldots,N_{1}-1\}}\min\{N_{1}-U,N_{2}\}.
\end{align}
Thus, the optimal order gain of the asymptotic outage probability is $d(U^{*}_{d})$. Next, we introduce the optimal design parameter $U^{*}(\beta,T_{1},T_{2})$ which minimizes the outage probability, i.e.,
\begin{align}\label{eq:optU_asymOP}
U^{*}(\beta,T_{1},T_{2})\define\arg\:\min_{U\in\{0,1,\ldots,N_{1}-1\}}\left(1-\mathcal{S}\left(\beta,U,T_{1},T_{2}\right)\right)\;.
\end{align}

The properties of $U_{d}^{*}$ and $U^{*}(\beta,T_{1},T_{2})$ are given by:
\begin{lemma}[Optimality Properties]\label{lem:opt_U}
$\quad$
\begin{enumerate}
\item When $\beta\to0$, we have $d\left(U^{*}_{d}\right)=N_{2}$ for all $U^{*}_{d}\in\{0,1,\ldots,N_{1}-N_{2}\}$;
\item $\exists\bar{\beta}>0$, such that for all $\beta<\bar{\beta}$, we have\footnote{As $U^{*}(\beta,T_{1},T_{2})$ is independent of $\beta$, we also write it as $U^{*}(T_{1},T_{2})$.} 
\begin{align}
&\hspace{-7.2cm}U^{*}(T_{1},T_{2})\notag\\
=
\begin{cases}
&\hspace{-2mm}N_{1}-N_{2}-1,\;{\rm if}\:b_{2}\left(N_{1}-N_{2}-1,T_{1},T_{2}\right)<\\
&\hspace{2mm}b_{1}\left(N_{1}-N_{2},T_{1},T_{2}\right)+b_{2}\left(N_{1}-N_{2},T_{1},T_{2}\right)\\
&\hspace{-2mm}N_{1}-N_{2},\;\hspace{6mm}{\rm otherwise}
\end{cases}
.
\end{align}
%$U^{*}(T_{1},T_{2})=$
%\begin{align}
%\begin{cases}
%&\hspace{-2mm}N_{1}-N_{2}-1,\;{\rm if}\:b_{2}\left(N_{1}-N_{2}-1,T_{1},T_{2}\right)<\\
%&\hspace{2mm}b_{1}\left(N_{1}-N_{2},T_{1},T_{2}\right)+b_{2}\left(N_{1}-N_{2},T_{1},T_{2}\right)\\
%&\hspace{-2mm}N_{1}-N_{2},\;\hspace{6mm}{\rm otherwise}
%\end{cases}
%.\notag
%\end{align}
\end{enumerate}
\end{lemma}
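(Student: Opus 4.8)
The plan is to handle the two claims separately, drawing entirely on the asymptotic characterization of \emph{Theorem \ref{them:CPj_lowbeta}}; throughout I hold $T_1,T_2$ fixed and abbreviate $b(U)=b(U,T_1,T_2)$, $b_j(U)=b_j(U,T_1,T_2)$, $\mathcal{S}(\beta,U)=\mathcal{S}(\beta,U,T_1,T_2)$. The first claim follows directly from the order gain $d(U)=\min\{N_1-U,N_2\}$: since $N_2$ is fixed while $N_1-U$ strictly decreases in $U$, the value $\min\{N_1-U,N_2\}$ never exceeds $N_2$ and equals $N_2$ exactly when $N_1-U\ge N_2$, i.e. when $U\le N_1-N_2$. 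Hence the maximizers of $d(U)$ over $\{0,\ldots,N_1-1\}$ are precisely $\{0,1,\ldots,N_1-N_2\}$ (nonempty since $N_1>N_2$), and on this set $d(U_d^*)=N_2$, as asserted.

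For the second claim the idea is to upgrade the pointwise equivalences $1-\mathcal{S}(\beta,U)\sim b(U)\,\beta^{\min\{N_1-U,N_2\}}$ into an ordering of the finitely many indices $U\in\{0,\ldots,N_1-1\}$ that is valid for all sufficiently small $\beta$. First I would discard every $U>N_1-N_2$: such an index has order gain $N_1-U<N_2$, so for any $U'\le N_1-N_2$ the outage ratio obeys $\frac{1-\mathcal{S}(\beta,U')}{1-\mathcal{S}(\beta,U)}\sim\frac{b(U')}{b_1(U)}\,\beta^{\,N_2-(N_1-U)}\to0$, so $U'$ strictly outperforms $U$ for small $\beta$ and no such $U$ is a minimizer. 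Among the surviving indices $U\in\{0,\ldots,N_1-N_2\}$, all share the common order gain $N_2$, so comparing outages reduces to comparing coefficients: when two such indices carry distinct coefficients, the ratio of their outages tends to the ratio of the coefficients, and the index with the smaller coefficient wins for all sufficiently small $\beta$.

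To identify the minimal coefficient I would invoke the monotonicity from \emph{Theorem \ref{them:CPj_lowbeta}} that $b_2(U)$ decreases in $U$; since the overall coefficient equals $b_2(U)$ for $U<N_1-N_2$, the smallest coefficient over $\{0,\ldots,N_1-N_2-1\}$ is attained at $U=N_1-N_2-1$ (admissible since $N_1>N_2$). The minimizer over the whole surviving set is therefore whichever of $U=N_1-N_2-1$, with coefficient $b_2(N_1-N_2-1)$, and $U=N_1-N_2$, with coefficient $b_1(N_1-N_2)+b_2(N_1-N_2)$, carries the smaller coefficient, which reproduces exactly the stated two-case rule with the equality case assigned to $U=N_1-N_2$ by convention. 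Taking $\bar\beta$ to be the minimum of the finitely many pairwise thresholds above makes all comparisons hold at once, so $U^*$ is pinned down and, as the comparisons use only the $\beta$-independent coefficients and order gains, is independent of $\beta$ on $(0,\bar\beta)$.

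The step requiring the most care is this passage from the per-index asymptotic relations---each merely the statement that a ratio tends to $1$---to a single threshold $\bar\beta$ below which the entire ordering is simultaneously correct, together with the boundary case $b_2(N_1-N_2-1)=b_1(N_1-N_2)+b_2(N_1-N_2)$, where the leading-order asymptotics cannot separate the two candidates and the tie-breaking convention in the statement must be invoked.
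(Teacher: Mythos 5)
Your proof is correct and follows essentially the same route as the paper's: restrict the search to the order-gain-optimal set $\{0,1,\ldots,N_{1}-N_{2}\}$, then compare the coefficients of $\beta^{N_{2}}$ using the monotonicity of $b_{2}\left(U,T_{1},T_{2}\right)$ in $U$ from \emph{Theorem \ref{them:CPj_lowbeta}}, which yields exactly the stated two-case rule. You are in fact more explicit than the paper on two points it leaves implicit --- the ratio argument discarding every $U>N_{1}-N_{2}$ and the finite minimum of pairwise thresholds producing a single uniform $\bar{\beta}$ --- and your closing observation that the exact-tie case $b_{2}\left(N_{1}-N_{2}-1,T_{1},T_{2}\right)=b_{1}\left(N_{1}-N_{2},T_{1},T_{2}\right)+b_{2}\left(N_{1}-N_{2},T_{1},T_{2}\right)$ cannot be settled by leading-order asymptotics applies equally to the paper's own proof, which also resolves it only by convention.
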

\begin{proof}
See Appendix \ref{proof:lem_optU}.
\end{proof}

%\begin{corollary}\label{corr:opt_order}
%The optimal order gain of the overall outage probability is $N_{2}$.
%\end{corollary}

Result $1)$ in \emph{Lemma \ref{lem:opt_U}} shows that $d(U_{d}^{*})$ is \emph{independent} of $U$. Thus, the IN scheme does not provide order-wise performance improvement compared to the non-IN scheme. Note that the impact of $U_{d}^{*}$ on the asymptotic outage probability is through the coefficient. Result 2) of \emph{Lemma \ref{lem:opt_U}} tells us that the asymptotic outage probability $1-\mathcal{S}\left(\beta,U^{*}(T_{1},T_{2}),T_{1},T_{2}\right)$, optimized over $U$ for given $T_{1}$ and $T_{2}$, can be further optimized by optimizing $T_{1}$ and $T_{2}$, i.e.,
\begin{align}\label{eq:opt_prob_T}
\min_{T_{1}>1,T_{2}>1} 1-\mathcal{S}\left(\beta,U^{*}(T_{1},T_{2}),T_{1},T_{2}\right)\;.
\end{align}
We shall investigate this optimization problem in future work.

%Although any $U^{*}_{d}\in\{0,1,\ldots,N_{1}-N_{2}\}$ can result in the same optimal order gain $N_{2}$, the difference lies in the coefficient $b\left(U,T_{1},T_{2}\right)$. \emph{Lemma \ref{lem:opt_U}} $2)$ shows that $U^{*}$, which minimizes the asymptotic outage probability, converges to a fixed value in the set $\{N_{1}-N_{2}-1,N_{1}-N_{2}\}$. It can be shown that the exact value of $U^{*}$ depends on whether $b_{2}\left(N_{1}-N_{2}-1,T_{1},T_{2}\right)$ is smaller than $b_{1}\left(N_{1}-N_{2},T_{1},T_{2}\right)+b_{2}\left(N_{1}-N_{2},T_{1},T_{2}\right)$ or not. As shown in \emph{Theorem \ref{them:CPj_lowbeta}}, $T_{1}$ and $T_{2}$ also have impact on the coefficient. Thus, the outage performance can be further improved by jointly optimizing $T_{1}$, $T_{2}$ and $U$, which is however beyond the scope of this paper.

\section{Conclusions}
In this paper, we propose a user-centric IN scheme in downlink two-tier multi-antenna HetNets.  Using tools from stochastic geometry, we first obtaine a tractable expression of the coverage probability. Then, we characterize the asymptotic behavior of the outage probability in the high reliability regime. The asymptotic results show that the maximum IN DoF can affect the order gain of the asymptotic outage probability, while the IN thresholds only affect the coefficient of the asymptotic outage probability. Moreover, we show that the IN scheme can linearly improve the outage performance, and characterize the optimal maximum IN DoF which minimizes the asymptotic outage probability. The analytical results obtained in this paper provide valuable design insights for practical HetNets.

\appendix

\subsection{Proof of Lemma \ref{lem:num_req_pmf}}\label{proof:pmf_Lbar}
According to Slivnyak's theorem \cite{haenggi09}, we focus on a macro-BS located at origin. We refer to this macro-BS as macro-BS $0$. Note that both scheduled macro-users and scheduled pico-users may send IN requests to macro-BS $0$. We first characterize the probability that a scheduled macro-user sends an IN request to macro-BS $0$. Denote $R_{1i}$ as the distance between macro-BS $0$ and a randomly selected (according to the uniform distribution) scheduled macro-user which we refer to as scheduled macro-user $i$. According to Section \ref{subsec:IN_descp}, scheduled macro-user $i$ sends an IN request to macro-BS $0$ if
\begin{align}
\frac{P_{1}Y_{1}^{-\alpha_{1}}}{P_{1}R_{1i}^{-\alpha_{1}}}<T_{1}\;\;{\rm and}\;\;Y_{1}<R_{1}\;.
\end{align}
Hence, scheduled macro-user $i$ sends an IN request to macro-BS $0$ with probability $p_{1i}(T_{1})$ which is given by
\begin{align}
p_{1i}(T_{1})={\rm Pr}\left(T_{1}^{-\frac{1}{\alpha_{1}}}R_{1i}<Y_{1}<R_{1i}\right)\;.
\end{align}
Assume that the scheduled macro-users form a homogeneous PPP with density $\lambda_{1}$. Conditioned on $R_{1i}=r$, we have
%\footnote{Here, we make explicit the dependence of $p_{1i}$ on $x$.}
\begin{align}\label{eq:prob_macro_send}
p_{1i,R_{1i}}(r,T_{1})&={\rm Pr}\left(T_{1}^{-\frac{1}{\alpha_{1}}}r<Y_{1}<r\right)\notag\\
&=\int_{T_{1}^{-\frac{1}{\alpha_{1}}}r}^{r}f_{Y_{1}}(y){\rm d}y
\end{align}
where $f_{Y_{1}}(y)$ is the probability density function (p.d.f.) of $Y_{1}$ given by (\ref{eq:pdfY1}). Then, the scheduled macro-user density at distance $r$ away from macro-BS $0$ is $p_{1i,R_{1i}}(r,T_{1})\lambda_{1}$. This indicates that the scheduled macro-users which send IN requests to macro-BS $0$ form an inhomogeneous PPP with density $p_{1i,R_{1i}}(r,T_{1})\lambda_{1}$ at distance $r$ away from macro-BS $0$.
%Thus, the average number of IN requests sent by scheduled macro-users to macro-BS $0$ is
%\begin{align}\label{eq:req_macro}
%\bar{L}_{1}&=2\pi\int_{0}^{\infty}r p_{1i,R_{1i}}(r,T_{1})\lambda_{1}{\rm d}r\notag\\
%&=2\pi\lambda_{1}\int_{0}^{\infty}r\int_{T_{1}^{-\frac{1}{\alpha_{1}}}r}^{r}f_{Y_{1}}(y){\rm d}y{\rm d}r\;.
%\end{align}

Next, we characterize the probability that a scheduled pico-user sends an IN request to macro-BS $0$. Denote $R_{2i}$ as the distance between macro-BS $0$ and a randomly selected (according to the uniform distribution) scheduled pico-user which refer to as scheduled pico-user $i$. Similarly, we assume that the scheduled pico-users form a homogeneous PPP with density $\lambda_{2}$, and it is independent of the PPP formed by the scheduled macro-users. Then, we can show that the scheduled pico-users which send IN requests to macro-BS $0$ form an inhomogeneous PPP with density $p_{2i,R_{2i}}(r,T_{2})\lambda_{2}$ at distance $r$ away from macro-BS $0$, where
\begin{align}\label{eq:prob_pico_send}
p_{2i,R_{2i}}(r,T_{2})&={\rm Pr}\left(\left(\frac{P_{2}}{P_{1}T_{2}}\right)^{\frac{1}{\alpha_{2}}}r^{\frac{\alpha_{1}}{\alpha_{2}}}<Y_{2}<\left(\frac{P_{2}}{P_{1}}\right)^{\frac{1}{\alpha_{2}}}r^{\frac{\alpha_{1}}{\alpha_{2}}}\right)\notag\\
&=\int_{\left(\frac{P_{2}}{P_{1}T_{2}}\right)^{\frac{1}{\alpha_{2}}}r^{\frac{\alpha_{1}}{\alpha_{2}}}}^{\left(\frac{P_{2}}{P_{1}}\right)^{\frac{1}{\alpha_{2}}}r^{\frac{\alpha_{1}}{\alpha_{2}}}}f_{Y_{2}}(y){\rm d}y
\end{align}
where $f_{Y_{2}}(y)$ is the p.d.f. of $Y_{2}$ given by (\ref{eq:pdfY2}).
%Thus, the average number of IN requests sent by scheduled pico-users to macro-BS $0$ is
%\begin{align}\label{eq:req_pico}
%\bar{L}_{2}&=2\pi\int_{0}^{\infty}r p_{2i,R_{2i}}(r,T_{2})\lambda_{2}{\rm d}r\notag\\
%&=2\pi\lambda_{2}\int_{0}^{\infty}r\int_{\left(\frac{P_{2}}{P_{1}T_{2}}\right)^{\frac{1}{\alpha_{2}}}r^{\frac{\alpha_{1}}{\alpha_{2}}}}^{\left(\frac{P_{2}}{P_{1}}\right)^{\frac{1}{\alpha_{2}}}r^{\frac{\alpha_{1}}{\alpha_{2}}}}f_{Y_{2}}(y){\rm d}y{\rm d}r\;.
%\end{align}

According to the superposition property of PPPs \cite{haenggi09}, the scheduled macro-users and the scheduled pico-users which send IN requests to macro-BS $0$, i.e., the potential IN users of macro-BS $0$, still form a PPP with density $p_{1i,R_{1i}}(r,T_{1})\lambda_{1}+p_{2i,R_{2i}}(r,T_{2})\lambda_{2}$ at distance $r$ away from macro-BS $0$. Therefore, the number of the potential IN users of macro-BS $0$ is Poisson distributed with the following parameter (i.e., mean)
\begin{align}\label{eq:req_pico_macro}
\bar{L}&=2\pi\int_{0}^{\infty}r \left(p_{1i,R_{1i}}(r,T_{1})\lambda_{1}+p_{2i,R_{2i}}(r,T_{2})\lambda_{2}\right){\rm d}r\notag\\
&=\bar{L}_{1}+\bar{L}_{2}\;.
%&=2\pi\lambda_{2}\int_{0}^{\infty}r\int_{\left(\frac{P_{2}}{P_{1}T_{2}}\right)^{\frac{1}{\alpha_{2}}}r^{\frac{\alpha_{1}}{\alpha_{2}}}}^{\left(\frac{P_{2}}{P_{1}}\right)^{\frac{1}{\alpha_{2}}}r^{\frac{\alpha_{1}}{\alpha_{2}}}}f_{Y_{2}}(y){\rm d}y{\rm d}r\;.
\end{align}

%Moreover, note that we assume the scheduled macro-users and the scheduled pico-users form two independent homogeneous PPPs. Then, we have
%\begin{align}
%\bar{L}=\bar{L}_{1}+\bar{L}_{2}\;,
%\end{align}
%which completes the proof.

\subsection{Proof of Lemma \ref{lem:INprob}}\label{proof:lem_INprob}
From Section \ref{subsec:IN_descp}, we know that whether a scheduled user sends an IN request to a macro-BS or not depends on its location relative to this macro-BS. Hence, the event that $u_{0}$ sends an IN request to one of its potential IN macro-BSs and the event that any other scheduled user sends an IN request to the same macro-BS are dependent. The dependence is especially high when $u_{0}$ and that scheduled user are close. For analytical tractability, we approximate these two events as independent events. Then, we have
\begin{align}\label{eq:INprob_cal}
&p_{c}\left(U,T_{1},T_{2}\right)\approx {\rm E}\left[\min\left\{\frac{U}{K_{0}+1},1\right\}\right]\notag\\
=&\sum_{k=0}^{U-1}{\rm Pr}\left(K_{0}=k\right)+\sum_{k=U}^{\infty}\frac{U}{k+1}{\rm Pr}\left(K_{0}=k\right)\;.
\end{align}
Substituting (\ref{eq:K_pmf}) into (\ref{eq:INprob_cal}), we have the final result.

\subsection{Proof of Theorem \ref{thm:overall_CP}}\label{proof:thm1}
Let $R_{j,1C}$ and $R_{j,1O}$ denote the minimum and maximum possible distances between $u_{0}\in\mathcal{U}_{j}$ and its nearest and furthest macro-interferers (among $u_{0}$'s potential IN macro-BSs which do not select $u_{0}$ for IN), respectively. Let $R_{j,2}$ denote the minimum possible distance between $u_{0}\in\mathcal{U}_{2}$ and its nearest pico-interferer. The relationships between $R_{j,1C}$, $R_{j,1O}$, $R_{j,2}$, and $Y_{j}$, respectively, are shown in Table \ref{tab:para_B2larger}. Based on (\ref{eq:SIRj0}) and conditioned on $Y_{j}=y$, we have
\begin{align}
&{\rm Pr}\left({\rm SIR}_{0}>\beta|u_{0}\in\mathcal{U}_{j},Y_{j}=y\right)\notag\\
=&{\rm Pr}\left(\left|\mathbf{h}_{j,00}^{\dagger}\mathbf{f}_{j,0}\right|^{2}>\beta y^{\alpha_{j}}\left(\frac{P_{1}}{P_{j}}I_{j,1C}+\frac{P_{1}}{P_{j}}I_{j,1O}+\frac{P_{2}}{P_{j}}I_{j,2}\right)\right)\notag\\
=&\sum_{n=0}^{M_{j}-1}\frac{\left(\beta y^{\alpha_{j}}\right)^{n}}{n!}\sum_{\left(n_{a}\right)_{a=1}^{3}\in\mathcal{N}_{n}}\binom{n}{n_{1},n_{2},n_{3}}\left(\frac{P_{1}}{P_{j}}\right)^{n_{1}+n_{2}}\left(\frac{P_{2}}{P_{j}}\right)^{n_{3}}\notag\\
&\hspace{0mm}\times {\rm E}_{I_{j,1C}}\left[I_{j,1C}^{n_{1}}\exp\left(-\beta y^{\alpha_{j}}\frac{P_{1}}{P_{j}}I_{j,1C}\right)\right]\notag\\
&\hspace{0mm}\times{\rm E}_{I_{j,1O}}\left[I_{j,1O}^{n_{2}}\exp\left(-\beta y^{\alpha_{j}}\frac{P_{1}}{P_{j}}I_{j,1O}\right)\right]\notag\\
&\hspace{0mm}\times {\rm E}_{I_{j,2}}\left[I_{j,2}^{n_{3}}\exp\left(-\beta y^{\alpha_{j}}\frac{P_{2}}{P_{j}}I_{j,2}\right)\right]\notag\\
=&\sum_{n=0}^{M_{j}-1}\frac{\left(-\beta y^{\alpha_{j}}\right)^{n}}{n!}\sum_{\left(n_{a}\right)_{a=1}^{3}\in\mathcal{N}_{n}}\binom{n}{n_{1},n_{2},n_{3}}\left(\frac{P_{1}}{P_{j}}\right)^{n_{1}+n_{2}}\notag\\
&\times \left(\frac{P_{2}}{P_{j}}\right)^{n_{3}}\mathcal{L}^{(n_{1})}_{I_{j,1C}}\left(s,r_{j,1C},r_{j,1O}\right)|_{s=\beta y^{\alpha_{j}}\frac{P_{1}}{P_{j}}}\notag\\
&\times \mathcal{L}^{(n_{2})}_{I_{j,1O}}\left(s,r_{j,1O}\right)|_{s=\beta y^{\alpha_{j}}\frac{P_{1}}{P_{j}}}\mathcal{L}_{I_{j,2}}^{(n_{3})}(s,r_{j,2})|_{s=\beta y^{\alpha_{j}}\frac{P_{2}}{P_{j}}}
\end{align}
where $\mathcal{L}^{(n)}_{I}(s,r)$ denotes the $n$th-order derivative of the Laplace transform $\mathcal{L}_{I}(s,r)$.

\begin{table}[t]
\caption{Parameter values}\label{tab:para_B2larger}
\begin{center}
\vspace{-6mm}
\begin{tabular}{|c|c|c|c|c|c|}
\hline
$j$&$r_{j,1C}$&$r_{j,1O}$&$r_{j,2}$\\
\hline
$1$ &$Y_{1}$&$T_{1}^{\frac{1}{\alpha_{1}}}Y_{1}$ &$\left(\frac{P_{2}}{P_{1}}\right)^{\frac{1}{\alpha_{2}}}Y_{1}^{\frac{\alpha_{1}}{\alpha_{2}}}$\\
\hline
$2$&$\left(\frac{P_{1}}{P_{2}}\right)^{\frac{1}{\alpha_{1}}}Y_{2}^{\frac{\alpha_{2}}{\alpha_{1}}}$ & $\left(\frac{P_{1}}{P_{2}}T_{2}\right)^{\frac{1}{\alpha_{1}}}Y_{2}^{\frac{\alpha_{2}}{\alpha_{1}}}$&$Y_{2}$\\
\hline
\end{tabular}
\end{center}
\vspace{-6mm}
\end{table}

Now, we calculate $\mathcal{L}_{I}(s,r)$ and $\mathcal{L}^{(n)}_{I}(s,r)$, respectively.  First, $\mathcal{L}_{I_{j,1C}}(s,r_{j,1C},r_{j,1O})$ can be calculated as follows:
\begin{align}\label{eq:LT_IC}
&\mathcal{L}_{I_{j,1C}}(s,r_{j,1C},r_{j,1O})\notag\\
=&{\rm E}_{\Phi_{1C},\{\mathbf{g}_{1,\ell}\}}\left[\exp(-s\sum_{\ell\in\Phi_{1C}}\left|D_{1,\ell0}\right|^{-\alpha_{1}}\mathbf{g}_{1,\ell}\right]\notag\\
\eqla&{\rm E}_{\Phi_{1C}}\left[\prod_{\ell\in\Phi_{1C}}{\rm E}_{\{\mathbf{g}_{1,\ell}\}}\left[\exp(-s\left|D_{1,\ell0}\right|^{-\alpha_{1}}\mathbf{g}_{1,\ell}\right]\right]\notag\\
=&{\rm E}_{\Phi_{1C}}\left[\prod_{\ell\in\Phi_{1C}}\frac{1}{1+s|D_{1,\ell0}|^{-\alpha_{1}}}\right]\notag\\
\eqlb&\exp\left(-2\pi p_{\bar{c}}\left(U,T_{1},T_{2}\right)\lambda_{1}\int_{r_{j,1C}}^{r_{j,1O}}\left(1-\frac{1}{1+\frac{s}{r^{\alpha_{1}}}}\right)r{\rm d}r\right)
%\eqlc&\exp\left(-\frac{2\pi p_{\bar{c}}\left(U,T_{1},T_{2}\right)\lambda_{1}s^{\frac{2}{\alpha_{1}}}}{\alpha_{1}}\left(\right)\right)
\end{align}
where $\mathbf{g}_{1,\ell}=\left|\mathbf{h}_{1,\ell0}^{\dagger}\mathbf{f}_{1,\ell}\right|^{2}$, (a) is obtained by noting that $\mathbf{g}_{1,\ell}$ ($\ell\in\Phi_{1C}$) are mutually independent, and (b) is obtained by using the probability generating functional of a PPP \cite{haenggi09}. Further, by first letting $s^{-\frac{1}{\alpha_{1}}}r=t$ and then $\frac{1}{1+t^{-\alpha_{1}}}=w$, we obtain the result in (\ref{eq:LT_1in}).

Next, based on (\ref{eq:LT_IC}) and utilizing Fa${\rm \grave{a}}$ di Bruno's formula \cite{johnson02}, $\mathcal{L}_{I_{j,1C}}^{(n_{1})}(s,r_{j,1C},r_{j,1O})$ can be calculated as follows:
\begin{align}
&\mathcal{L}_{I_{j,1C}}^{(n_{1})}(s,r_{j,1C},r_{j,1O})\notag\\
=&\mathcal{L}_{I_{j,1C}}(s,r_{j,1C},r_{j,1O})\sum_{(m_{a})_{a=1}^{n}\in\mathcal{M}_{n_{1}}}\frac{n_{1}!}{\prod_{a=1}^{n_{1}}(a!)^{m_{a}}m_{a}!}\notag\\
&\hspace{-5mm}\times\prod_{a=1}^{n_{1}}\left(2\pi p_{\bar{c}}\left(U,T_{1},T_{2}\right)\lambda_{1}\int_{r_{j,1C}}^{r_{j,1O}}\frac{{\rm d}^{a}}{{\rm d}s^{a}}\left(\frac{1}{1+\frac{s}{r^{\alpha_{1}}}}\right)r{\rm d}r\right)^{m_{a}}\notag\\
=&\mathcal{L}_{I_{j,1C}}(s,r_{j,1C},r_{j,1O})\sum_{(m_{a})_{a=1}^{n}\in\mathcal{M}_{n_{1}}}\frac{n_{1}!(-1)^{n_{1}}}{\prod_{a=1}^{n_{1}}(a!)^{m_{a}}m_{a}!}\notag\\
&\hspace{-5mm}\times\prod_{a=1}^{n_{1}}\left(2\pi p_{\bar{c}}\left(U,T_{1},T_{2}\right)\lambda_{1}\Gamma\left(a+1\right)\int_{r_{j,1C}}^{r_{j,1O}}\frac{r^{1-a\alpha_{1}}}{\left(1+\frac{s}{r^{\alpha_{1}}}\right)^{a+1}}{\rm d}r\right)^{m_{a}}
\end{align}
where the integral can be solved using similar method as calculating $\mathcal{L}_{I_{j,1C}}(s,r_{j,1C},r_{j,1O})$. Similarly, we can calculate $\mathcal{L}_{I_{j,1O}}\left(s,r_{j,1O}\right)$, $\mathcal{L}^{(n_{2})}_{I_{j,1O}}\left(s,r_{j,1O}\right)$, $\mathcal{L}_{I_{j,2}}(s,r_{j,2})$ and $\mathcal{L}_{I_{j,2}}^{(n_{3})}(s,r_{j,2})$. Finally, removing the conditions on $Y_{j}=y$ and after some algebraic manipulations, we can obtain the final result.
%where $g^{(a)}(s,r_{1,ij},r_{1,oj})$ is the $a$th-order derivative of $g(s,r_{1,ij},r_{1,oj})$, and $g(s,r_{1,ij},r_{1,oj})=-2\pi p_{\bar{c}}\left(U,T_{1},T_{2}\right)\lambda_{1}\int_{r_{1,ij}}^{r_{1,oj}}\left(1-\frac{1}{1+\frac{s}{r^{\alpha_{1}}}}\right)r{\rm d}r$. The integral in

\subsection{Proof of Theorem \ref{them:CPj_lowbeta}}\label{proof:thm2}
Conditioned on $Y_{j}=y$, we have
\begin{align}\label{eq:OPj_condi}
&1-{\rm Pr}\left({\rm SIR}_{0}>\beta|u_{0}\in\mathcal{U}_{j},Y_{j}=y\right)\notag\\
=&{\rm Pr}\left(\left|\mathbf{h}_{j,00}^{\dagger}\mathbf{f}_{j,0}\right|^{2}\le\beta y^{\alpha_{j}}\left(\frac{P_{1}}{P_{j}}I_{j,1C}+\frac{P_{1}}{P_{j}}I_{j,1O}+\frac{P_{2}}{P_{j}}I_{j,2}\right)\right)\notag\\
=&\exp\left(-\beta y^{\alpha_{j}}\left(\frac{P_{1}}{P_{j}}I_{j,1C}+\frac{P_{1}}{P_{j}}I_{j,1O}+\frac{P_{2}}{P_{j}}I_{j,2}\right)\right)\notag\\
&\times \sum_{n=M_{j}}^{\infty}\frac{\left(\beta y^{\alpha_{j}}\right)^{n}}{n!}\left(\frac{P_{1}}{P_{j}}I_{j,1C}+\frac{P_{1}}{P_{j}}I_{j,1O}+\frac{P_{2}}{P_{j}}I_{j,2}\right)^{n}\notag\\
=&\sum_{n=M_{j}}^{\infty}\frac{\left(-\beta y^{\alpha_{j}}\right)^{n}}{n!}\sum_{\left(n_{a}\right)_{a=1}^{3}\in\mathcal{N}_{n}}\binom{n}{n_{1},n_{2},n_{3}}\left(\frac{P_{1}}{P_{j}}\right)^{n_{1}+n_{2}}\notag\\
&\hspace{-3mm}\times \mathcal{L}^{(n_{1})}_{I_{j,1C}}\left(s,r_{j,1C},r_{j,1O}\right)|_{s=\beta y^{\alpha_{j}}\frac{P_{1}}{P_{j}}}\mathcal{L}^{(n_{2})}_{I_{j,1O}}\left(s,r_{j,1O}\right)|_{s=\beta y^{\alpha_{j}}\frac{P_{1}}{P_{j}}}\notag\\
&\hspace{-3mm}\times \left(\frac{P_{2}}{P_{j}}\right)^{n_{3}}\mathcal{L}_{I_{j,2}}^{(n_{3})}(s,r_{j,2})|_{s=\beta y^{\alpha_{j}}\frac{P_{2}}{P_{j}}}\;.
\end{align}
%where $\mathcal{L}^{(n_{1})}_{I_{j,1C}}\left(s,r_{1,ij},r_{1,oj}\right)$, $\mathcal{L}^{(n_{2})}_{I_{j,1O}}\left(s,r_{1,oj}\right)$, and $\mathcal{L}_{I_{j,2}}^{(n_{3})}(s,r_{2,j})$ are given in (\ref{eq:LTdiff_I1in}) and (\ref{eq:LTdiff_1O2}).
Similar to the calculations in Appendix \ref{proof:thm1}, after some algebraic manipulations and removing the condition on $Y_{j}=y$, we have
\begin{align}\label{eq:OP_j_sumtoinf}
&1-\mathcal{S}_{j}\left(U,T_{1},T_{2},\beta\right)\notag\\
=&\int_{0}^{\infty}\sum_{n=M_{j}}^{\infty}\mathcal{T}_{j,Y_{j}}\left(n,y,U,T_{1},T_{2},\beta\right)f_{Y_{j}}(y){\rm d}y
\end{align}
where
\begin{align}
\mathcal{T}_{j,Y_{j}}\left(n,y,U,T_{1},T_{2},\beta\right)=&\frac{1}{n!}\sum_{(n_{a})_{a=1}^{3}\in\mathcal{N}_{n}}\binom{n}{n_{1},n_{2},n_{3}}\notag\\
&\hspace{-3cm}\times\mathcal{\tilde{L}}^{(n_{1})}_{I_{j,1C}}\left(s,y\right)|_{s=\beta y^{\alpha_{j}}\frac{P_{1}}{P_{j}}}\mathcal{\tilde{L}}^{(n_{2})}_{I_{j,1O}}\left(s,y\right)|_{s=\beta y^{\alpha_{h}}\frac{P_{1}}{P_{j}}}\notag\\
&\hspace{-3cm}\times\mathcal{\tilde{L}}^{(n_{3})}_{I_{j,2}}\left(s,y\right)|_{s=\beta y^{\alpha_{j}}\frac{P_{2}}{P_{j}}}\;.
\end{align}

Now, we calculate the asymptotic outage probability when $\beta\to0$, i.e.,
\begin{align}
\lim_{\beta\to0}\int_{0}^{\infty}\sum_{n=M_{j}}^{\infty}\mathcal{T}_{j,Y_{j}}\left(n,y,U,T_{1},T_{2},\beta\right)f_{Y_{j}}(y){\rm d}y\;.
\end{align}
We note that
\begin{align}
B^{'}(a,b,z)=\frac{(1-z)^{b}}{b}+o\left((1-z)^{b}\right)\;,\quad{\rm as}\; z\to1\;.
\end{align}
Then, we have
{\small\begin{align}
B^{'}\left(\frac{2}{\alpha},1-\frac{2}{\alpha},\frac{1}{1+c\beta}\right)&= \frac{\left(c\beta\right)^{1-\frac{2}{\alpha}}}{1-\frac{2}{\alpha}}+o\left(\beta^{1-\frac{2}{\alpha}}\right)\;,\\
B^{'}\left(1+\frac{2}{\alpha},a-\frac{2}{\alpha},\frac{1}{1+c\beta}\right)&= \frac{(c\beta)^{a-\frac{2}{\alpha}}}{a-\frac{2}{\alpha}}+o\left(\beta^{a-\frac{2}{\alpha}}\right)\;,
\end{align}}where $c\in\mathbb{R}^{+}$. Based on these two asymptotic expressions,
%and let $s=\tilde{c}\beta$ ($\tilde{c}\in\mathbb{R}^{+}$) in $\mathcal{L}^{(n)}_{I}\left(s,r\right)$,
we can obtain\footnote{$f(x)=o\left(g(x)\right)$ means $\lim_{x\to0}\frac{f(x)}{g(x)}=0$.}
\begin{align}
&\mathcal{\tilde{L}}^{(n_{1})}_{I_{j,1C}}\left(s,y\right)=\beta^{n_{1}}\sum_{(m_{a})_{a=1}^{n_{1}}\in\mathcal{M}_{n_{1}}}\frac{n_{1}!}{\prod_{a=1}^{n_{1}}m_{a}!}\notag\\
&\hspace{-3mm}\times\prod_{a=1}^{n_{1}}\left(\frac{\frac{2\pi}{\alpha_{1}}p_{\bar{c}}\left(U,T_{1},T_{2}\right)\lambda_{1}}{a-\frac{2}{\alpha_{1}}}\left(\frac{P_{1}y^{\alpha_{j}}}{P_{j}}\right)^{\frac{2}{\alpha_{1}}}\left(1-\left(\frac{1}{T_{j}}\right)^{a-\frac{2}{\alpha_{1}}}\right)\right)^{m_{a}}\notag\\
&\hspace{2mm}+o\left(\beta^{n_{1}}\right)\;,\label{eq:LT_1C_lowbeta}\\
&\mathcal{\tilde{L}}^{(n_{2})}_{I_{j,1O}}\left(s,y\right)=\beta^{n_{2}}\sum_{(p_{a})_{a=1}^{n_{2}}\in\mathcal{M}_{n_{2}}}\frac{n_{2}!}{\prod_{a=1}^{n_{2}}p_{a}!}\notag\\
&\hspace{2mm}\times \prod_{a=1}^{n_{2}}\left(\frac{\frac{2\pi}{\alpha_{1}}\lambda_{1}}{a-\frac{2}{\alpha_{1}}}\left(\frac{P_{1}}{P_{j}}\right)^{\frac{2}{\alpha_{1}}}y^{\frac{2\alpha_{j}}{\alpha_{1}}}\left(\frac{1}{T_{j}}\right)^{a-\frac{2}{\alpha_{1}}}\right)^{p_{a}}+o\left(\beta^{n_{2}}\right)\;,\label{eq:LT_1O_lowbeta}\\
&\mathcal{\tilde{L}}^{(n_{3})}_{I_{j,2}}\left(s,y\right)=\beta^{n_{3}}\sum_{(q_{a})_{a=1}^{n_{3}}\in\mathcal{M}_{n_{3}}}\frac{n_{3}!}{\prod_{a=1}^{n_{2}}q_{a}!}\notag\\
&\hspace{2mm}\times \prod_{a=1}^{n_{3}}\left(\frac{\frac{2\pi}{\alpha_{2}}\lambda_{2}}{a-\frac{2}{\alpha_{2}}}\left(\frac{P_{2}}{P_{j}}\right)^{\frac{2}{\alpha_{2}}}y^{\frac{2\alpha_{j}}{\alpha_{2}}}\right)^{q_{a}}+o\left(\beta^{n_{3}}\right)\;.\label{eq:LT_2_lowbeta}
\end{align}
Moreover, utilizing dominated convergence theorem, we can show that
\begin{align}\label{eq:OP_lowbeta_change}
&\lim_{\beta\to0}\int_{0}^{\infty}\sum_{n=M_{j}}^{\infty}\mathcal{T}_{j,Y_{j}}\left(n,y,U,T_{1},T_{2},\beta\right)f_{Y_{j}}(y){\rm d}y\notag\\
=&\int_{0}^{\infty}\sum_{n=M_{j}}^{\infty}\lim_{\beta\to0}\mathcal{T}_{j,Y_{j}}\left(n,y,U,T_{1},T_{2},\beta\right)f_{Y_{j}}(y){\rm d}y\;.
\end{align}
Hence, substituting (\ref{eq:LT_1C_lowbeta}), (\ref{eq:LT_1O_lowbeta}) and (\ref{eq:LT_2_lowbeta}) into (\ref{eq:OP_j_sumtoinf}), and after some algebraic manipulations, we obtain Results 1), 2) and 3) in \emph{Theorem \ref{them:CPj_lowbeta}}. To complete the proof, we now show that $b_{2}\left(U,T_{1},T_{2}\right)$ decreases with $U$. This can be proved by noting that i) $b_{2}\left(U,T_{1},T_{2}\right)$ is an increasing function of $p_{\bar{c}}\left(U,T_{1},T_{2}\right)$, and ii) $p_{\bar{c}}\left(U,T_{1},T_{2}\right)$ decreases with $U$ (which can be easily shown using (\ref{eq:INprob_cal})).

\subsection{Proof of Lemma \ref{lem:opt_U}}\label{proof:lem_optU}
\subsubsection{Proof of Result 1)}
It can be easily shown that when $U\in\{0,1,\ldots,N_{1}-N_{2}\}$, we have $N_{1}-U\ge N_{2}$. Thus, $d(U)=N_{2}$. When $U\in\{ N_{1}-N_{2}+1,\ldots,N_{1}-1\}$, we have $N_{1}-U< N_{2}$. Thus, $d(U)=N_{1}-U<N_{2}$. Hence,  by (\ref{eq:optU_order}), we have $U_{d}^{*}=N_{2}$.

\subsubsection{Proof of Result 2)}
First, we note that $U^{*}(\beta,T_{1},T_{2})$ can achieve the optimal order gain, i.e., $U^{*}(\beta,T_{1},T_{2})\in\{0,1,\ldots,N_{1}-N_{2}\}$. Next, we compare the coefficients of $\beta^{N_{2}}$ which are achieved by different $U\in\{0,1,\ldots,N_{1}-N_{2}\}$. We consider two cases. i) When $U<N_{1}-N_{2}$, as $b_{2}\left(U,T_{1},T_{2}\right)$ decreases with $U$, the coefficients are $b_{2}\left(N_{1}-N_{2}-1,T_{1},T_{2}\right)<b_{2}\left(N_{1}-N_{2}-2,T_{1},T_{2}\right)<\ldots<b_{2}\left(0,T_{1},T_{2}\right)$. ii) When $U=N_{1}-N_{2}$, the coefficient of $\beta^{N_{2}}$ is $b_{1}\left(N_{1}-N_{2},T_{1},T_{2}\right)+b_{2}\left(N_{1}-N_{2},T_{1},T_{2}\right)$.

Based on the discussions above and (\ref{eq:optU_asymOP}), we know that the exact value of $U^{*}(\beta,T_{1},T_{2})$ depends on whether $b_{2}\left(N_{1}-N_{2}-1,T_{1},T_{2}\right)$ is smaller than $b_{1}\left(N_{1}-N_{2},T_{1},T_{2}\right)+b_{2}\left(N_{1}-N_{2},T_{1},T_{2}\right)$ or not. In particular, when $b_{2}\left(N_{1}-N_{2}-1,T_{1},T_{2}\right)<b_{1}\left(N_{1}-N_{2},T_{1},T_{2}\right)+b_{2}\left(N_{1}-N_{2},T_{1},T_{2}\right)$, we have $U^{*}(\beta,T_{1},T_{2})=N_{1}-N_{2}-1$; otherwise, $U^{*}(\beta,T_{1},T_{2})=N_{1}-N_{2}$.

%\bibliographystyle{IEEEtran}
%\bibliography{IEEEabrv,journals,books,confs}

% Generated by IEEEtran.bst, version: 1.13 (2008/09/30)

\end{document}